\newcommand{\de}[1]{\left( #1 \right)}
\newcommand{\De}[1]{\left[#1\right]}
\newcommand{\DE}[1]{\left\{#1\right\}}
\def\id{\leavevmode\hbox{\small1\kern-3.8pt\normalsize1}}
\def\identity{\leavevmode\hbox{\small1\kern-3.8pt\normalsize1}}
\newtheorem{mydef}{Definition}
\newtheorem{appendixdef}{Definition}[section]
\renewcommand{\epsilon}{\varepsilon}
\newtheorem{definition}{Definition} 
\newtheorem{lemma}[definition]{Lemma}
\newtheorem{thm}[definition]{Theorem}
\newtheorem{apptheorem}{Theorem}[section]
\newtheorem*{rep@theorem}{\rep@title}
\newcommand{\newreptheorem}[2]{%
\newenvironment{rep#1}[1]{%
 \def\rep@title{#2 \ref{##1} (restatement)}%
 \begin{rep@theorem}}%
 {\end{rep@theorem}}}
\def\ba#1\ea{\begin{align}#1\end{align}}
\def\ban#1\ean{\begin{align*}#1\end{align*}}
\newcommand{\ot}{\otimes}
\newcommand{\be}{\begin{equation}}
\newcommand{\ee}{\end{equation}}
\def\benum{\begin{enumerate}}
\def\eenum{\end{enumerate}}
\def\squareforqed{\hbox{\rlap{$\sqcap$}$\sqcup$}}
\def\qed{\ifmmode\squareforqed\else{\unskip\nobreak\hfil
\penalty50\hskip1em\null\nobreak\hfil\squareforqed
\parfillskip=0pt\finalhyphendemerits=0\endgraf}\fi}
\def\endenv{\ifmmode\;\else{\unskip\nobreak\hfil
\penalty50\hskip1em\null\nobreak\hfil\;
\parfillskip=0pt\finalhyphendemerits=0\endgraf}\fi}
\newenvironment{example}{\noindent \textbf{{Example~}}}{\qed}
\newcommand{\<}{\langle}
\renewcommand{\>}{\rangle}
\def\id{{\operatorname{id}}}
\def\be{\begin{equation}}
\def\ee{\end{equation}}
\def\ben{\begin{eqnarray}}
\def\een{\end{eqnarray}}
\def\ot{\otimes}
\def\bei{\begin{itemize}}
\def\eei{\end{itemize}}
\mathchardef\ordinarycolon\mathcode`\:
\def\vcentcolon{\mathrel{\mathop\ordinarycolon}}
\newcommand{\nc}{\newcommand}
 \nc{\proj}[1]{|#1\rangle\!\langle #1 |} 
\nc{\avg}[1]{\langle#1\rangle}
\nc{\conv}{\operatorname{conv}}
\nc{\smfrac}[2]{\mbox{$\frac{#1}{#2}$}} \nc{\Tr}{\operatorname{Tr}}
\nc{\ox}{\otimes} \nc{\dg}{\dagger} \nc{\dn}{\downarrow}
\nc{\lmax}{\lambda_{\text{max}}}
\nc{\lmin}{\lambda_{\text{min}}}
\nc{\csupp}{{\operatorname{csupp}}}
\nc{\qsupp}{{\operatorname{qsupp}}} \nc{\var}{\operatorname{var}}
\nc{\rar}{\rightarrow} \nc{\lrar}{\longrightarrow}
\nc{\poly}{\operatorname{poly}}
\nc{\polylog}{\operatorname{polylog}} \nc{\Lip}{\operatorname{Lip}}
\nc{\Om}{\Omega}
\nc{\wt}[1]{\widetilde{#1}}
\def\>{\rangle}
\def\<{\langle}
\nc{\glneq}{{\raisebox{0.6ex}{$>$}  \hspace*{-1.8ex} \raisebox{-0.6ex}{$<$}}}
\nc{\gleq}{{\raisebox{0.6ex}{$\geq$}\hspace*{-1.8ex} \raisebox{-0.6ex}{$\leq$}}}
\nc{\vholder}[1]{\rule{0pt}{#1}}
\nc{\wh}[1]{\widehat{#1}}
\nc{\h}[1]{\widehat{#1}}
\nc{\ob}[1]{#1}
\def\beq{\begin {equation}}
\def\eeq{\end {equation}}
\def\be{\begin{equation}}
\def\ee{\end{equation}}
\nc{\eq}[1]{(\ref{eq:#1})} 
\nc{\eqs}[2]{\eq{#1} and \eq{#2}}
\nc{\eqn}[1]{Eq.~(\ref{eqn:#1})}
\nc{\eqns}[2]{Eqs.~(\ref{eqn:#1}) and (\ref{eqn:#2})}
\nc{\region}{\cS\cW}
\newenvironment{protocol*}[1]
  {
    \begin{center}
      \hrulefill\\
      \textbf{#1}
  }
  {
    \vspace{-1\baselineskip}
    \hrulefill
    \end{center}
  }
\begin{document}
\title{Generalized XOR games with $d$ outcomes and the task of non-local computation}
\author{Ravishankar \surname{Ramanathan}}
\email{ravishankar.r.10@gmail.com}
\affiliation{National Quantum Information Center of Gda\'nsk,  81-824 Sopot, Poland}
\affiliation{Institute of Theoretical Physics and Astrophysics, University of Gda\'{n}sk, 80-952 Gda\'{n}sk, Poland}
\author{Remigiusz Augusiak}
\affiliation{Center for Theoretical Physics, Polish Academy of Sciences, Aleja Lotników 32/46, 02-668 Warsaw, Poland}
\affiliation{ICFO–Institut de Ciencies Fotoniques, Mediterranean Technology Park, 08860 Castelldefels (Barcelona), Spain}
\author{Gl\'aucia Murta}
\affiliation{Departamento de Fisica, Universidade Federal de Minas Gerais, Caixa Postal 702, 30123-970, Belo Horizonte, MG, Brazil}
\begin{abstract}
Two-party \textsc{xor} games (correlation Bell inequalities with two outcomes per party) are the most studied Bell inequalities, and one of the few classes for which the optimal quantum value is known to be exactly calculable.
We study a natural generalization of the binary \textsc{xor} games  to the class of linear games with $d > 2$ outcomes, and propose an easily computable bound on the quantum value of these games. Many interesting properties such as the impossibility of a quantum strategy to win these games, and the quantum bound on the CHSH game generalized to $d$ outcomes are derived. 
We also use the proposed bound to prove a large-alphabet generalization of the principle of no quantum advantage in non-local computation, showing that quantum theory provides no advantage in the task of non-local distributed computation of a class of functions with $d$ outcomes for prime $d$, while general no-signaling boxes do. This task is one of the information-theoretic principles attempting to characterize the set of quantum correlations from amongst general no-signaling ones. 
\end{abstract}

\maketitle

\section{Introduction.}
Quantum non-local correlations are one of the most intriguing aspects of Nature, evidenced in the violation of Bell inequalities. Besides their foundational interest, these correlations have also proven to be useful in information processing tasks such as secure device-independent randomness amplification and expansion \cite{rand}, cryptographic secure key generation \cite{securekey} and reduction of communication complexity \cite{CommCompl}. 

Concerning such applications, it is typically of most interest to compute the classical and quantum value of the Bell expression, the classical value being the maximum over local realistic assignments of outcomes while the quantum value is the maximum attained using measurements on entangled quantum states. However, neither of these values is easy to calculate. Computing the classical value is done by means of an integer program and is in general a hard problem \cite{Pitowsky, Hastad}. On the other hand, it is not even known whether the quantum value is computable for all Bell inequalities, since there is a priori no restriction on the dimension of the Hilbert space for the quantum states and measurements; although in some instances it is possible to compute the value efficiently or to find a good approximation. A hierarchy of semi-definite programs from \cite{Navascues} is typically used to get (upper) bounds on the quantum value, although the quality of approximation achieved by these bounds remains unknown. The size of these programs also increases exponentially with the number of inputs and outputs in the Bell expression, so that a central problem of utmost importance in non-locality theory is to find easily computable good bounds to handle general classes of Bell inequalities.  

An important class of Bell inequalities for which the quantum value \textit{can} be computed exactly is the class known as two-party binary \textsc{xor} games or equivalently as bipartite two-outcome correlation inequalities. In a binary \textsc{xor} game, the two parties Alice and Bob receive inputs $x \in [m_A], y \in [m_B]$ (we denote $[m_A] := \{1,\dots, m_A\}$) and respond with outputs $a, b \in \{0,1\}$. The winning constraint for each pair of inputs $(x, y)$ only depends on the \textsc{xor} modulo $2$ of the parties' answers, i.e., the Bell expression in the binary \textsc{xor} game only involves probabilities $P(a \oplus_{2} b = k| x, y)$ for $k \in \{0,1\}$. The fact that these are equivalent to Bell inequalities for correlation functions with binary outcomes is seen by noting that in this case the correlators $\mathcal{E}_{x,y}$ are given by $\mathcal{E}_{x,y} = \sum_{k=0,1} (-1)^k P(a \oplus_{2} b = k | x, y)$. For these games, it was shown in \cite{Cleve, Wehner} based upon a theorem by Tsirelson \cite{Tsirelson} that the quantum value can be computed efficiently by means of a semi-definite program, although computing the classical value is known to be a hard problem even for this class of games \cite{Hastad}. Besides binary \textsc{xor} games, few general results are known regarding the maximum quantum violation of classes of Bell inequalities.

The study of correlation Bell inequalities for binary outcomes was in part driven by the fact that many of the quantum information-processing protocols were developed for qubits, for which binary outcome games appear naturally. Recently, there has been much interest in developing applications of higher-dimensional entanglement \cite{Exp-high-dim, Qudit-Toffoli, Qudit-randomness, Qudit-key-dist} for which Bell inequalities with more than two outcomes may be naturally suited. Therefore, both for fundamental reasons as well as for these applications, the study of Bell inequalities with more outcomes is crucial.

A natural extension of the binary outcome \textsc{xor} games is to the class of generalized \textsc{xor}-d games, where the outputs of the two parties are not restricted to be binary, although the winning constraint still depends upon the generalized \text{xor} (addition modulo $d$), with $d$ being the number of outcomes. 
The generalization can also be extended to the class known as \textsc{linear} games \cite{Hastad}, where the parties output answers that are elements of a finite Abelian group and the winning constraint depends upon the group operation acting on the outputs. Linear games are the paradigmatic example of non-local games with more than two outcomes, and a study of their classical and quantum values is crucial, especially in light of applications such as \cite{Jed}. 
In the context of Bell inequalities, these were first studied in \cite{Buhrman} where a large alphabet generalization of the CHSH inequality called CHSH-d was considered, which has since been investigated in \cite{Liang, Ji, Bavarian, Howard, GRRH+15}, 
An important property of the \textsc{xor}-d games concerns their relationship with communication complexity, following \cite{vanDam, Wang} it is seen that correlations (boxes) winning a non-trivial total function \textsc{xor}-d game for prime $d$ can result in a trivialization of communication complexity. 
A related information-theoretic principle called \textit{no quantum advantage in non-local computation} (no-NLC) has also been suggested in \cite{NLC}; this proposes that quantum correlations are those that do not provide any advantage over classical correlations in the task of distributed non-local computation of arbitrary binary functions, while general no-signaling correlations do. It is also of interest to investigate whether the above principle can be extended to functions of more outcomes.  


In this paper, we present a novel efficiently computable bound to the quantum value of linear games and use it to derive several interesting properties, with particular emphasis on the important case of \textsc{xor}-d games for prime $d$. We illustrate the bound with the example of the CHSH-d game for prime and prime power $d$, recovering recent results derived using alternative (more technical) methods.
As another illustration, we use the bound to show that for uniformly chosen inputs, no non-trivial total function \textsc{xor}-d game can be won with a quantum strategy and consequently that
these no-signaling boxes that trivialize communication complexity cannot be realized within quantum theory. We further prove a  large alphabet generalization of the no-NLC principle, showing that quantum theory provides no advantage in the task of non-local computation of a restricted class of functions with $d$ outcomes for prime $d$. 
For the sake of clarity of exposition, we only include sketches of proofs in the main text with details deferred to the Appendices.

\section{A bound on the quantum value of linear games.}
\label{sec:lin-bound}
Linear games are a generalization of \textsc{xor} games to an arbitrary output alphabet size and are defined as follows:
\begin{mydef}
A two-player linear game $\textsl{g}^{l} = (q, f)$ is one where two players Alice and Bob receive questions $u$, $v$ from sets $Q_A$ and $Q_B$ respectively, chosen from a probability distribution $q(u,v)$ by a referee. They reply with respective answers $a, b \in (G,+)$ where $G$ is a finite Abelian group with associated operation $+$. The game is defined by a winning constraint $a + b = f(u,v)$ for some function $f : Q_A \times Q_B \rightarrow G$.  
\end{mydef}

The most interesting linear games are arguably the \textsc{xor}-d games, denoted $\textsl{g}^{\oplus}$ which are the linear games corresponding to the cyclic group $\mathbb{Z}_d$, the integers with operation addition modulo $d$ ($\oplus_d$). 
The value of the linear game is given by the expression 
\begin{equation}
\omega_s(\textsl{g}^{l}) = \max_{\{P_{A,B|U,V}\} \in \mathcal{S}} \sum_{\substack{u \in Q_A \\ v \in Q_B}} \sum_{a,b \in G} q(u,v) V(a,b|u,v) P(a,b | u, v),
\end{equation}
where $V(a,b|u,v) = 1$ if $a + b =f(u,v)$ and $0$ otherwise and the maximum is taken over all boxes $\{P_{A,B|U,V}\}$ in the set $\mathcal{S}$ which may correspond to the set of classical $\mathcal{C}$, quantum $\mathcal{Q}$ or more general no-signaling boxes $\mathcal{NS}$. 
The maximum classical value of the game (the maximum over all deterministic assignments of $a, b$ for each respective input $u,v$ or their convex combinations) is denoted $\omega_c(\textsl{g}^{l})$, the maximum value of the game achieved by a quantum strategy (POVM measurements on a shared entangled state of arbitrary Hilbert space dimension) is denoted $\omega_q(\textsl{g}^{l})$, while the maximum value achieved by a no-signaling strategy (where neither party can signal their choice of input using the correlations) is denoted $\omega_{ns}(\textsl{g}^{l})$. These games have been studied \cite{Hastad, Khot} in the context of hardness of approximation of several important optimization problems, in attempts to identify the existence of polynomial time algorithms to approximate the optimum solution of the problem to within a constant factor. Linear games belong to the class of unique games \cite{Kempe}; in a unique game $\textsl{g}^u$ for every answer $a$ of Bob, there is a unique answer $b = \pi_{u,v}(a)$ that wins the game, where $\pi_{u,v}$ is some permutation that depends on the input pair $(u,v)$. For every game in this class, a no-signaling box exists that wins the game, so that $\omega_{ns}(\textsl{g}^l) = \omega_{ns}(\textsl{g}^u) = 1$. Such a box for the general unique game with $d$ outcomes is defined by the entries $P(a,b|u,v) = 1/d$ if $b = \pi_{u,v}(a)$ and $0$ otherwise for all input pairs $(u,v)$, this strategy clearly wins the game, and is no-signaling since the output distribution seen by each party is fully random for every input, i.e., $P(a|u) = P(b|v) = 1/d$.  

As in the case of Boolean functions \cite{BBLV,Brassard2}, the classical value $\omega_c(g^l)$ for any linear game is strictly greater than the pure random guess value $1/|G|$, this is shown in Lemma \ref{lem:cl-value}. 
\begin{lemma}
\label{lem:cl-value}
For any linear game $g^{l}$ corresponding to a function $f(u,v)$ with $u \in Q_A, v \in Q_B$ and for an arbitrary probability distribution $q(u,v)$, we have 
\begin{equation}
\label{cl-low-b}
\omega_c(g^{l}) \geq \frac{1}{|G|} \left( 1 + \frac{|G|-1}{m}\right),
\end{equation}
where $m = \min\{|Q_A|, |Q_B|\}$. 
\end{lemma}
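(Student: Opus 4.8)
The plan is to lower-bound $\omega_c(g^{l})$ by the expected winning probability of an explicit shared-randomness strategy (equivalently, a convex combination of deterministic strategies) and then to optimize which party's input set is exploited. The key elementary observation is that the winning probability is a linear functional of the strategy, so $\omega_c(g^{l})$ — being the maximum of this functional over the polytope of local strategies — is at least the average winning probability of \emph{any} distribution over deterministic strategies; this legitimizes the probabilistic argument below.

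First I would fix a reference input $u_0 \in Q_A$ and consider the deterministic strategy in which Bob, on input $v$, outputs $b = f(u_0,v)$, while Alice outputs $a = 0$ (the identity of $G$) when her input is $u_0$ and an element $\alpha(u) \in G$ when her input is $u \neq u_0$. On every pair $(u_0,v)$ the constraint $a+b = f(u_0,v)$ holds, so this strategy wins all such pairs irrespective of the remaining values of $\alpha$. Now I would randomize: let $u_0$ be uniform in $Q_A$ and, independently, let the values $\{\alpha(u)\}_{u \neq u_0}$ be i.i.d.\ uniform in $G$.

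Next, compute the expected winning probability over this randomness and over $(u,v)\sim q$. For a fixed pair $(u,v)$: with probability $1/|Q_A|$ one has $u_0=u$ and the strategy wins; otherwise $u\neq u_0$, so Alice's output on $u$ is a uniformly random group element independent of $f(u_0,v)$, and the event $\alpha(u)=f(u,v)-f(u_0,v)$ has probability exactly $1/|G|$. Hence the conditional winning probability is $\tfrac{1}{|Q_A|}+\bigl(1-\tfrac{1}{|Q_A|}\bigr)\tfrac{1}{|G|}$, which is independent of $(u,v)$ and rearranges to $\tfrac{1}{|G|}\bigl(1+\tfrac{|G|-1}{|Q_A|}\bigr)$. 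Therefore $\omega_c(g^{l}) \geq \tfrac{1}{|G|}\bigl(1+\tfrac{|G|-1}{|Q_A|}\bigr)$. Running the symmetric argument with a reference input $v_0\in Q_B$ (Alice outputs $f(u,v_0)$; Bob outputs $0$ on $v_0$ and an i.i.d.\ uniform element elsewhere) yields $\omega_c(g^{l}) \geq \tfrac{1}{|G|}\bigl(1+\tfrac{|G|-1}{|Q_B|}\bigr)$. Since the right-hand side is decreasing in the denominator, keeping the stronger of the two bounds gives the claim with $m=\min\{|Q_A|,|Q_B|\}$.

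I do not expect a serious obstacle here. The one genuine point is to notice that Alice must play \emph{random} group elements on the non-reference inputs rather than simply copying a fixed row of $f$: copying a fixed row only yields $\omega_c(g^{l})\geq 1/m$, which is strictly weaker than the target bound. Beyond that, the argument is a clean averaging computation, and the only care needed is the preliminary remark that the classical value dominates the expected value of any ensemble of deterministic strategies.
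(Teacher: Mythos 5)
Your proposal is correct and is essentially the paper's own argument: the uniformly random reference input $u_0$ plays exactly the role of the shared random variable $w$ in the paper's strategy (Bob outputs $f(w,v)$; Alice outputs the identity if $u=w$ and a uniform group element otherwise), and the averaging computation yielding $\tfrac{1}{m}+\bigl(1-\tfrac{1}{m}\bigr)\tfrac{1}{|G|}$ is identical. The only cosmetic difference is that you phrase it as a convex combination of deterministic strategies and explicitly symmetrize over the two parties, where the paper assumes w.l.o.g.\ $|Q_A|\le|Q_B|$ and uses shared randomness directly.
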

\begin{proof}
Let $d = |G|$, Alice and Bob receive inputs $u, v$ of $\log_d |Q_A|$ and $\log_d |Q_B|$ dits respectively. Suppose w.l.o.g that $|Q_A| \leq |Q_B|$ ($m = |Q_A|$), and let the two parties share a uniformly distributed random variable $w$ of $\log_d |Q_A|$ dits. The following classical strategy achieves the lower bound in Eq.(\ref{cl-low-b}). Bob outputs $b = f(w,v)$, while Alice checks if $u = w$ and if so outputs $a = e$; if not she outputs a uniformly distributed $a \in G$. In the case when $u = w$ which happens with probability $\frac{1}{m}$, $a + b = e + f(w,v) = f(u,v)$ and the strategy succeeds. When $u \neq w$, we have that $a + f(w,v)$ is uniformly random since $a$ is uniform, and the strategy succeeds with probability $\frac{1}{d}$. The value achieved by this strategy is therefore $\frac{1}{m} + \left(1 - \frac{1}{m}\right) \frac{1}{d}$.
\end{proof}
Computing the quantum value of the linear game is an onerous task, for which efficiently computable bounds are hard to find. 
We now present a bound on the quantum value of a linear game  in Theorem \ref{thm2} by using the norms of a set of \textit{game matrices} defined using the characters of the associated group. The detailed derivation of the bound is shown in the proof of this theorem presented in the Appendix \ref{sec:lin-bound-app}, and the utility and possible tightness of the bound (in scenarios such as the CHSH-d game that is applicable to tasks such as relativistic bit commitment \cite{Jed}) is considered in this section.
\begin{thm}\label{thm2}
\label{norm-bound}
The quantum value of a linear game $\textsl{g}^l$ with input sets $ Q_A, Q_B$  can be bounded as 
\begin{eqnarray}
\label{xor-d-bound-2}
\omega_{q}(\textsl{g}^{l}) \leq \frac{1}{|G|} \left[ 1 + \sqrt{|Q_A| |Q_B|} \sum_{x \in G\setminus \{e\}} \Vert \Phi_{x} \Vert \right],
\end{eqnarray}
where 
$\Phi_{x} = \sum_{(u,v) \in Q_A \times Q_B} q(u,v) \chi_{x}(f(u,v)) | u \rangle \langle v|$ are the game matrices, $\chi_{x}$ are the characters of the group $G$ and $\Vert \cdot \Vert$ denotes the spectral norm. In particular, for an \textsc{xor}-d game with $m_A$ and $m_B$ inputs for the two parties, the quantum value can be bounded as
\begin{eqnarray}
\label{eq:xor-d-bound-3}
\omega_{q}(\textsl{g}^{\oplus}) \leq \frac{1}{d} \left[ 1 + \sqrt{m_A m_B} \sum_{k= 1}^{d-1} \Vert \Phi_{k} \Vert \right],
\end{eqnarray}
with $\Phi_k = \sum_{\substack{u \in [m_A] \\ v \in [m_B]}} q(u,v) \zeta^{k f(u,v)} |u \rangle \!\langle v|$ and $\zeta = \exp{(2 \pi I/d)}$.
\end{thm}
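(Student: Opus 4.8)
The plan is to start from an arbitrary quantum strategy: a shared state $|\psi\rangle$ and POVMs $\{E^u_a\}_a$ for Alice on each input $u$, $\{F^v_b\}_b$ for Bob on each input $v$. The winning probability is $\omega_q = \sum_{u,v} q(u,v) \sum_{a+b = f(u,v)} \langle\psi| E^u_a \otimes F^v_b |\psi\rangle$. The natural first step is to pass to the character (Fourier) basis of the finite Abelian group $G$: the indicator of $a+b = f(u,v)$ is $\frac{1}{|G|}\sum_{x \in \widehat G} \chi_x\big(a+b-f(u,v)\big)$. Substituting and separating the trivial character $x=e$ (which contributes exactly $\frac{1}{|G|}$, since the POVMs sum to the identity) gives
\begin{equation}
\omega_q = \frac{1}{|G|}\Bigg[ 1 + \sum_{x \neq e} \langle\psi|\Big(\sum_{u,v} q(u,v)\,\overline{\chi_x(f(u,v))}\, A^u_x \otimes B^v_x\Big)|\psi\rangle\Bigg],
\end{equation}
where $A^u_x := \sum_a \chi_x(a) E^u_a$ and $B^v_x := \sum_b \chi_x(b) F^v_b$ are contractions (norm $\le 1$), since they are convex-type combinations of unitary phases weighted by a POVM. (For the $\mathbb{Z}_d$ case these are exactly generalized observables with eigenvalues among the $d$-th roots of unity.)

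Next I would bound each term $x \ne e$ in absolute value. Define the vectors $|\alpha_x\rangle = \sum_u |u\rangle_{\mathrm{reg}} \otimes (A^u_x \otimes \id)|\psi\rangle$ and $|\beta_x\rangle = \sum_v |v\rangle_{\mathrm{reg}} \otimes (\id \otimes (B^v_x)^\dagger)|\psi\rangle$ on an enlarged space with an ancillary register indexing the inputs. Then the term equals $\langle \alpha_x | (\Phi_x \otimes \id\otimes\id) |\beta_x\rangle$, where $\Phi_x = \sum_{u,v} q(u,v)\,\overline{\chi_x(f(u,v))}\,|u\rangle\langle v|$ (the definition in the statement, up to a harmless complex conjugation that does not change the spectral norm since $\overline{\Phi_x} = \Phi_{x^{-1}}$ and we sum over all $x\ne e$). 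By Cauchy–Schwarz this is at most $\|\Phi_x\|\cdot \||\alpha_x\rangle\|\cdot\||\beta_x\rangle\|$. Finally $\||\alpha_x\rangle\|^2 = \sum_u \langle\psi|(A^u_x)^\dagger A^u_x\otimes\id|\psi\rangle \le \sum_u \|A^u_x\|^2 \le |Q_A|$, and similarly $\||\beta_x\rangle\|^2 \le |Q_B|$, giving the bound $\sqrt{|Q_A||Q_B|}\,\|\Phi_x\|$ for each $x\ne e$. Summing over $x \in G\setminus\{e\}$ yields \eqref{xor-d-bound-2}, and specializing to $G = \mathbb{Z}_d$ with $\chi_k(j) = \zeta^{kj}$ gives \eqref{eq:xor-d-bound-3}.

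The main obstacle I anticipate is purely bookkeeping: keeping the tensor-factor structure straight when introducing the input-indexing ancilla register, and verifying carefully that $A^u_x$ and $B^v_x$ really are contractions — this uses that a POVM-weighted sum of unit-modulus phases has operator norm $\le 1$, which follows since for any unit vector $|\phi\rangle$, $|\langle\phi|A^u_x|\phi\rangle| \le \sum_a |\chi_x(a)|\,\langle\phi|E^u_a|\phi\rangle = 1$, and more generally $\|A^u_x\| \le 1$ by a similar argument applied to $\langle\phi|A^u_x|\phi'\rangle$ via Cauchy–Schwarz on the POVM decomposition. The only subtlety worth flagging is the complex-conjugation symmetry $\overline{\chi_x} = \chi_{x^{-1}}$ that lets us identify the set of relevant matrices with $\{\Phi_x : x \ne e\}$ as stated; since the outer sum ranges over all non-identity characters this is cosmetic. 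Everything else is Cauchy–Schwarz and the normalization of POVMs.
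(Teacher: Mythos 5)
Your proposal is correct and follows essentially the same route as the paper's proof in Appendix~\ref{sec:lin-bound-app}: character (Fourier) decomposition of the winning constraint, isolation of the trivial character contributing $1/|G|$, and a Cauchy--Schwarz bound on $\langle\alpha_x|\,\identity\otimes\Phi_x\,|\beta_x\rangle$ with $\||\alpha_x\rangle\|^2\le|Q_A|$, $\||\beta_x\rangle\|^2\le|Q_B|$. The only (harmless) difference is that you work with general POVMs and prove the operators $A^u_x$ are contractions, whereas the paper assumes projective measurements without loss of generality (the Hilbert-space dimension being unrestricted) so that the $A^u_x$ are unitary.
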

\begin{proof}
We sketch the proof of the bound using the Fourier transform for the \textsc{xor}-d games here, the generalization to linear games uses the analogous Fourier transform on finite Abelian groups \cite{Terras} and is deferred to the Appendix \ref{sec:lin-bound-app}. For a quantum strategy given by projective measurements $\{\Pi_{u}^{a} \}, \{\Sigma_{v}^{b} \}$ on a pure state $| \Psi \rangle \in \mathbb{C}^{D \times D}$, we introduce the generalized correlators $\langle A_{u}^{x} \ot B_{v}^{y} \rangle$ 
for unitary operators defined as
\begin{equation}
A_{u}^{x} = \sum_{a \in G} \zeta^{-ax} \Pi_{u}^{a} \; \;  \text{and} \; \; B_{v}^{y} = \sum_{b \in G} \zeta^{-by} \Sigma_{v}^{b}.
\end{equation}
The probabilities $P(a,b|u,v)$ that enter the game expression are calculated from the inverse transform to be
\begin{eqnarray}
\label{eq:game-prob}
P(a \oplus_d b = f(u,v) | u,v) = \frac{1}{d} \sum_{k =0}^{d-1} \zeta^{k f(u,v)} \langle A_{u}^{k} \ot B_{v}^{k} \rangle. 
\end{eqnarray}


Now, with vectors $|\alpha_{k} \rangle, |\beta_{k} \rangle$ and the \textsc{xor}-d game matrices $\Phi_{k}$ defined as
\begin{eqnarray}
&&|\alpha_{k} \rangle = \sum_{u \in Q_A} \left((A_{u}^{k})^{\dagger} \otimes \identity \right) |\Psi \rangle \otimes |u \rangle \; \; \text{,} \; \; \nonumber \\
&&|\beta_{k} \rangle = \sum_{v \in Q_B} \left(\identity \otimes B_{v}^{k} \right) |\Psi \rangle \otimes |v \rangle, \nonumber \\
&&\Phi_{k} = \sum_{(u,v) \in Q_A \times Q_B} q(u,v) \zeta^{k f(u,v)} | u \rangle \langle v|,
\end{eqnarray}
the game expression $\sum_{(u,v) \in Q_A \times Q_B} q(u,v) P(a \oplus_d b = f(u,v)|u,v)$ can be rewritten using Eq.(\ref{eq:game-prob})  as
$(1/d) \sum_{k=0}^{d-1} \langle \alpha_k | \identity \otimes \Phi_{k} | \beta_{k} \rangle$ and the norm bound in Eq.(\ref{eq:xor-d-bound-3}) follows. 
\end{proof}
It should be noted that as shown in \cite{Kempe}, the quantum value of a linear game can be efficiently approximated, to be precise for any linear game $\textsl{g}^{l}$ with $\omega_q(\textsl{g}^{l}) = 1 - \delta$, there exists an efficient algorithm to approximate this value using a semi-definite program and a rounding procedure that gives an entangled strategy achieving $\omega_q^{\text{app}}(\textsl{g}^{l}) = 1 - 4 \delta'$, where $\delta/4 \leq \delta' \leq \delta$. While this is highly significant and useful for proving results such as a parallel repetition theorem for the quantum value of such games \cite{Kempe}, it would appear to be good for approximating the quantum value when the latter is close to unity, which is not the case for simple examples like the CHSH-d game. For uniform probability inputs $q(u,v) = 1/|Q_A| |Q_B|$ or when the input distribution possesses certain symmetries, as we shall see, the simple linear algebraic bound above supplements this result and proves to be very useful to derive other interesting properties of these games. 

We first illustrate the applicability and possible tightness of the bound by considering the flagship scenario of the CHSH-d game which generalizes the well-known CHSH game to a higher dimensional output. In this game, Alice and Bob are asked questions $u, v$ chosen uniformly at random from a finite field $\mathbb{F}_d$ of size $d$ so that $q(u, v) = 1/d^2$, where $d$ is a prime, or a prime power. They return answers $a, b \in \mathbb{F}_d$ with an aim to satisfy $a \oplus b = u \cdot v$ where the arithmetic operations are from the finite field. In \cite{Bavarian}, an intensive study of this game was performed, with two significant results obtained on the asymptotic classical and quantum values of the game. We now apply Theorem \ref{thm2} to re-derive in a simple manner the upper bound for the quantum value of CHSH-d. Comparison with the numerical results of \cite{Ji, Liang} indicates that the  bound in the following example of the CHSH-d game may not be tight in general, also note that the optimum value of the game for Pauli measurements was recently derived in \cite{Howard}. 

\begin{example}[see also \cite{Bavarian}]
\textit{The quantum value of the CHSH-d game for prime and prime power $d$, i.e., $d = p^r$ where $p$ is prime and $r \geq 1$ is an integer, can be bounded as
\begin{equation}
\label{eq:Bavarian-bound}
\omega_q(CHSH-d) \leq \frac{1}{d} + \frac{d-1}{d \sqrt{d}}.
\end{equation} }
\end{example}
\begin{proof}
%
Let us consider the CHSH-d game with associated function $f(u, v) = u \cdot v$. The entries of the game matrix $\Phi_k$ for prime $d$ are by definition $\Phi_k(u,v) = q(u,v)\zeta^{k (u \cdot v)}$ where $\zeta = \exp{\frac{2 \pi I}{d}}$ and $u, v \in \{0, \dots, d-1\}$, and we consider uniform probability inputs $q(u,v) = 1/d^2$. It is readily seen that for prime $d$, the game matrices $\Phi_k$ for $k \in \{1, \dots, d-1\}$ are equal to each other up to a permutation of rows (or columns). Moreover, a direct calculation using $\sum_{j=0}^{d-1} \zeta^j = 0$ yields that $\Phi_k^{\dagger} \Phi_k =  \identity/d^3$, so that $\Vert \Phi_k \Vert = 1/d \sqrt{d}, \; \; \forall k \in [d-1]$. Substitution into Eq.(\ref{eq:xor-d-bound-3}) with $m_A = m_B = d$ yields the bound in Eq.(\ref{eq:Bavarian-bound}) for prime $d$.  

Strictly analogous results are obtained for prime power $d = p^{r}$, where $p$ is prime and $r > 1$ is an integer. Note that here the operation $u \cdot v$ in the CHSH-d game is not defined as multiplication modulo $d$, but as multiplication in the finite field $\mathbb{F}_d$, see \cite{fields, Bavarian}. The non-zero elements of $\mathbb{F}_d$ under this multiplication operation form a cyclic group of size $d-1$, and we have $a^d = a, \; \; \forall a \in \mathbb{F}_d$. Here again, 
the game matrices $\Phi_k$ for $k \in [d-1]$ are equal to each other up to a permutation of rows (or columns). By explicit calculation, using the following properties of the characters: $\chi_k(a+b) = \chi_k(a) \chi_k(b)$ for any $a, b \in \mathbb{F}_d$; $\chi_k(a) = 1 \Longleftrightarrow a = 0$  and $\sum_{a \in \mathbb{F}_d} \chi_k(a \cdot b) = 0$ for $b \neq 0$ we obtain that $\Phi_k^{\dagger} \Phi_k = \frac{1}{d^3} \identity$ for all $k$. Substituting $\Vert \Phi_k \Vert = \frac{1}{d \sqrt{d}}, \; \; \forall k \in [d-1]$ into Eq.(\ref{eq:xor-d-bound-3}) with $|Q_A| = |Q_B| = d$ yields the bound.  
\end{proof}
Given the quantum bound, a natural question is whether there are linear games where the quantum value $\omega_q(g^{l})$ equals one, i.e., can there be quantum strategies that win a linear game? The interest in the question also stems from the domain of communication complexity. Following the results of \cite{vanDam, Wang}, any non-trivial total function \textsc{xor}-d game for prime $d$ and $n$ dits as input $\textbf{u} = (u_1, \dots, u_n), \textbf{v} = (v_1, \dots, v_n)$ is won by a no-signaling box that can result in a trivialization of communication complexity. To elaborate, it was shown that any no-signaling box that wins a non-trivial total function \textsc{xor}-d game for prime $d$ must contain as a sub-box, one of the functional boxes of the form $P(a \oplus_d b = f(u,v) | u,v) = 1/d$ for $a, b, u, v \in \{0, \dots, d-1\}$; having $d^n$ copies of the box and addressing this sub-box in each, Alice and Bob can compute any function of $d$ outputs with a single dit of communication, resulting in a trivialization of communication complexity. 

We now apply the bound to exclude these boxes that result in a trivialization of communication complexity from the set of quantum boxes. In particular, the following Lemma \ref{comm-comp-lem} shows that no non-trivial game for a total function $f(u,v)$ (a total function is one which is defined for all input pairs $(u,v)$) within the class of \textsc{xor}-d games $\textsl{g}^{\oplus}$ with uniformly chosen inputs can be won by a quantum strategy, meaning that there is no pseudo-telepathy game \cite{Brassard} within this class. 
\begin{lemma}
\label{comm-comp-lem}
For \textsc{xor}-d games $\textsl{g}^{\oplus}$ corresponding to total functions with $m$ questions per player, when the input distribution is uniform $q(u, v) = 1/m^2$, $\omega_q(\textsl{g}^{\oplus}) = 1$ iff $\omega_c(\textsl{g}^{\oplus}) = 1$, i.e., when rank$(\Phi_1) = 1$.
\end{lemma}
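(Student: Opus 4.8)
The plan is to prove the cycle of implications $\omega_c(\textsl{g}^{\oplus}) = 1 \Rightarrow \omega_q(\textsl{g}^{\oplus}) = 1 \Rightarrow \rank(\Phi_1) = 1 \Rightarrow \omega_c(\textsl{g}^{\oplus}) = 1$. The first arrow is immediate: a deterministic strategy is a special quantum strategy, so $\omega_c \le \omega_q \le 1$, whence $\omega_c = 1$ gives $\omega_q = 1$. The substance is in the remaining two arrows, and the quantum bound of Theorem~\ref{thm2} does the heavy lifting in the second one.

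For $\omega_q(\textsl{g}^{\oplus}) = 1 \Rightarrow \rank(\Phi_1) = 1$, I would feed the uniform distribution $q(u,v) = 1/m^2$ and the totality of $f$ into Theorem~\ref{thm2}. Then each game matrix is $\Phi_k = \frac{1}{m^2} M_k$ with $(M_k)_{uv} = \zeta^{k f(u,v)}$, so every one of the $m^2$ entries of $M_k$ is unimodular and $\Vert M_k \Vert_F^2 = m^2$. Since the spectral norm never exceeds the Frobenius norm, $\Vert \Phi_k \Vert = \frac{1}{m^2}\Vert M_k \Vert \le \frac{1}{m}$, with equality exactly when all singular values of $M_k$ but one vanish, i.e.\ when $\rank(\Phi_k) = 1$. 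Substituting into Eq.~(\ref{eq:xor-d-bound-3}) (with $m_A = m_B = m$) gives $\omega_q(\textsl{g}^{\oplus}) \le \frac{1}{d}\bigl[\,1 + m\sum_{k=1}^{d-1}\Vert\Phi_k\Vert\,\bigr] \le \frac{1}{d}\bigl[\,1 + (d-1)\,\bigr] = 1$. Hence $\omega_q(\textsl{g}^{\oplus}) = 1$ forces equality in the second inequality, and a sum of $d-1$ terms each bounded by $1/m$ can equal $(d-1)/m$ only if every term is $1/m$; in particular $\Vert\Phi_1\Vert = 1/m$, i.e.\ $\rank(\Phi_1) = 1$.

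For $\rank(\Phi_1) = 1 \Rightarrow \omega_c(\textsl{g}^{\oplus}) = 1$, I would use the rigidity of a rank-one matrix with unimodular entries. Write $M_1 = \vec p\,\vec q^{\,\dagger}$; from $|p_u||q_v| = |(M_1)_{uv}| = 1$ for all $u,v$ the moduli $|p_u|$ and $|q_v|$ are constant, so after rescaling we may assume all $p_u, q_v$ have modulus one, and by the remaining global phase freedom we may set $p_1 = 1$. Then $\bar q_v = (M_1)_{1v} = \zeta^{f(1,v)}$ and, noting $\bar q_1 = (M_1)_{11} = \zeta^{f(1,1)}$, $p_u = (M_1)_{u1}/\bar q_1 = \zeta^{f(u,1)-f(1,1)}$, both $d$-th roots of unity, so $\zeta^{f(u,v)} = p_u \bar q_v = \zeta^{\,f(u,1)+f(1,v)-f(1,1)}$ for all $(u,v)$. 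As $\zeta$ is a primitive $d$-th root of unity this gives $f(u,v) = \alpha(u) \oplus_d \beta(v)$ with $\alpha(u) := f(u,1)-f(1,1) \bmod d$ and $\beta(v) := f(1,v)$. The deterministic strategy in which Alice outputs $\alpha(u)$ and Bob outputs $\beta(v)$ satisfies $a \oplus_d b = f(u,v)$ on every input pair, so $\omega_c(\textsl{g}^{\oplus}) = 1$. (In hindsight $f(u,v) = \alpha(u)\oplus_d\beta(v)$ also makes $M_k$ a rank-one product for every $k$, consistent with the previous paragraph.)

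I expect the delicate point to be the passage from the quantum value to the norm inequality: Theorem~\ref{thm2} is stated for projective measurements on a pure state, so one should first recall the standard purification and Naimark-dilation argument that turns an arbitrary POVM strategy on a mixed state into this form without lowering the value. I would also flag that totality of $f$ is genuinely used twice — it is what makes all $m^2$ entries of $M_k$ unimodular (so the bound $\Vert\Phi_k\Vert \le 1/m$ is attained only at rank one), and it is what makes the classical product strategy win on \emph{every} input. The rest — the Frobenius-versus-spectral-norm estimate and the rank-one factorization — is elementary linear algebra.
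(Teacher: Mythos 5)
Your proof is correct, and the middle implication is argued by a genuinely different (and tighter) mechanism than the paper's. The paper obtains $\Vert\Phi_k\Vert\le 1/m$ from the induced $1$- and $\infty$-norms (each row and column of $\Phi_k$ has absolute sum $1/m$), and then extracts the rank-one conclusion from a separate analysis of the eigenvalue equation for $\Phi_1^{\dagger}\Phi_1$: it argues that the top eigenvalue $1/m^2$ forces all entries of the maximal eigenvector to have equal modulus and the phases $\zeta^{-f(w,1)+f(w,v)+\theta_v}$ to align, whence the columns of $\Phi_1$ are proportional. Your single Frobenius-norm step $\Vert\Phi_k\Vert\le\Vert\Phi_k\Vert_F=1/m$ delivers the same upper bound \emph{and} its equality case (rank one) simultaneously, replacing the paper's somewhat informal ``clearly the equation can only be satisfied when\dots'' phase-coherence argument with the clean fact that $\sigma_1=\bigl(\sum_i\sigma_i^2\bigr)^{1/2}$ iff all but one singular value vanish; this is where totality of $f$ enters, exactly as you flag. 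The final implication is the same in both treatments — a rank-one unimodular matrix factors as $\zeta^{f(u,v)}=\zeta^{\alpha(u)+\beta(v)}$, giving the winning deterministic strategy; your version just makes explicit the normalization of $\vec p,\vec q$ that the paper leaves implicit in ``a classical winning strategy for the first column extends to the entire game.'' The one thing the paper's route buys that yours does not is the explicit proportionality factor $\zeta^{\theta_l-\theta_k}$ between columns, but that is not needed for the lemma. Your closing remarks on Naimark dilation and on the two distinct uses of totality are apt and consistent with the paper's setup.
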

\begin{proof}

The constraint that the input distributions of questions to the players are uniform, $q(u,v) = 1/m^2$ for all $u, v$, is equivalent to $\Vert \Phi_k \Vert \leq 1/m$ since both the maximum (absolute value) column sum and row sum matrix norms are equal to $1/m$. Now $\omega_q(\textsl{g}^{\oplus}) = 1$ requires from the bound in Eq.(\ref{eq:xor-d-bound-3}) that $\Vert \Phi_k \Vert = 1/m$ for all $k \in \{1, \dots, d-1\}$. Consider the matrix ${\Phi_1}^{\dagger} \Phi_1$ which has entries $({\Phi_1}^{\dagger} \Phi_1)_{u,v} = \sum_{w=1}^{m} q(w,u) q(w,v) \zeta^{-f(w,u) + f(w,v)}$, where $\zeta = \exp{(2 \pi I/d)}$ is the $d$-th root of unity. Let $\{ \lambda_j \}$ be the maximum eigenvector corresponding to eigenvalue $1/m^2$ of ${\Phi_1}^{\dagger} \Phi_1$, with complex entries $\lambda_j = \vert \lambda_j \vert \zeta^{{\theta}_j}$. Let the entries of the eigenvector be ordered by absolute value, $\vert \lambda_1 \vert \geq \dots \geq \vert \lambda_m \vert$ and consider the eigenvalue equation corresponding to $\lambda_1$, we have 
\begin{equation}
\sum_{v, w = 1}^{m} |\lambda_v| \zeta^{-f(w,1) + f(w,v) + \theta_v} = m^2 |\lambda_1| \zeta^{\theta_1}.
\end{equation}
Clearly the above equation can only be satisfied when $\vert \lambda_j \vert = \vert \lambda_{j'} \vert \;\; \forall j, j'$ and when the phases add, i.e., when $f(w,v) - f(w,1) + \theta_v = f(w',v') - f(w',1) + \theta_{v'}\; \; \forall v,w,v',w'$, in particular choosing $w = w'$ here, we get $f(w,v) - f(w,v') = \theta_{v'} - \theta_{v} \; \forall w,v,v'$. With all $|\lambda_{j}|$ equal, the rest of the eigenvalue equations (for $u \neq 1$) lead to similar consistent constraint equations. 
We deduce that $\omega_q(\textsl{g}^{\oplus}) = 1$ only when the columns of the game matrix $\Phi_1$ are proportional to each other, the proportionality factor between columns $k, l$ being $\zeta^{f(u,k) - f(u,l)} = \zeta^{\theta_l - \theta_k}$. In this case (with $\text{rank}(\Phi_1) = 1$), a classical winning strategy which always exists for the first column of the game matrix $\Phi_1$ can be straightforwardly extended to a classical winning strategy for the entire game, meaning $\omega_c(\textsl{g}^{\oplus}) = 1$ also.

\end{proof}
It was recently shown that all the extremal points of the no-signaling polytope for any number of inputs and outputs cannot be realized within quantum theory \cite{our}. It remains an open question whether \textit{all} such vertices lead to a trivialization of communication complexity (at least in a probabilistic setting), if so this would be a compelling reason for their exclusion from correlations that can be realized in nature. Also, note that while the exclusion of the boxes trivializing communication complexity from the quantum set is not surprising, we include it here as an illustration of the applicability of the bound. Indeed in subsequent work \cite{GRRH+15}, the techniques used in this paper have also been applied to exclude boxes that win games corresponding to partial functions $f(u,v)$ from the quantum set, this further illustrates the utility of the technique since these latter boxes do not trivialize communication complexity and therefore can't be excluded on that basis.


\section{Linear games with no quantum advantage: the task of non-local computation.}
\label{sec:nlc}
Even though the quantum non-local correlations cannot be used to transmit information, they enable the performance of several tasks impossible in the classical world, such as the expansion and amplification of intrinsic randomness, device-independent secure key generation, etc. An unexpected limitation of quantum correlations however is the fact that they do not provide any advantage over classical correlations in the performance of a fundamental information-theoretic task, namely the non-local distributed computation of Boolean functions \cite{NLC}, even though certain super-quantum no-signaling correlations do. 

Consider a Boolean function $f(z_1, \dots, z_n)$ from $n$ bits to $1$ bit. A non-local (distributed) computation of the function is defined as follows. Two parties, Alice and Bob, are given inputs $(x_1, \dots, x_n)$ and $(y_1, \dots, y_n)$ obeying $x_i \oplus_2 y_i  = z_i$, each bit $x_i, y_i$ being $0$ or $1$ with equal probability. This ensures that neither party has access to any input $z_i$ on their own. To perform the non-local computation, Alice and Bob must output bits $a$ and $b$ respectively such that $a \oplus_2 b = f(x_1 \oplus_2 y_1, \dots, x_n \oplus_2 y_n)$. Their goal is thus to maximize the probability of success in this task for some given input distribution $p(z_1, \dots z_n) = p(x_1 \oplus_2 y_1, \dots, x_n \oplus_2 y_n)$. In \cite{NLC}, it was shown that surprisingly for \textit{any} input distribution $p(z_1, \dots, z_n)$, Alice and Bob sharing quantum resources cannot do any better than classical resources (both give rise to only a linear approximation of the computation), while they could successfully perform the task if the resources they shared were limited by the no-signaling principle alone. This no-advantage in non-local computation (NANLC) was so striking that it was postulated as an information-theoretic principle that picks out quantum theory from among general no-signaling theories, in relation to the correlations that the theory gives rise to \cite{NLC}. 

The above consideration of functions with a single-bit output is important since these encapsulate all decision problems, a natural class of problems used to define computational complexity classes. In the program of characterizing quantum correlations however, we must consider functions with multi-bit outputs as well as functions with higher input and output alphabets. We now use the bound (\ref{eq:xor-d-bound-3}) to construct a generalized non-local computation task for functions with higher input-output alphabet. 
Consider the following generalization of the non-local computation task to \textsc{xor}-d games, namely the computation of the function $g(z_1, \dots, z_n)$ with $z_i \in \{0, \dots, d-1\}$ where $d$ is a prime. In these games which we label $NLC_d$, Alice and Bob receive $n$ dits $\textbf{x}_n = (x_1, \dots, x_n)$ and $\textbf{y}_n = (y_1, \dots, y_n)$ which obey $x_i \oplus_d y_i = z_i$. Their task is to output dits $a, b$ respectively such that
\begin{equation}
\label{eq:func-NLC}
a \oplus_d b = g(\textbf{x}_{n-1} \oplus_d \textbf{y}_{n-1}) \cdot (x_n\oplus_d y_{n}),
\end{equation}
where $\textbf{x}_{n-1} \oplus_d \textbf{y}_{n-1}$ is the dit-wise \textsc{xor} of the $n-1$ dits, i.e., $\{x_1 \oplus_d y_1, \dots, x_{n-1} \oplus_d y_{n-1}\}$ and $g$ is an arbitrary function from $n-1$ dits to $1$ dit. The inputs are chosen according to
\begin{align}\label{probdistr}
\frac{1}{d^{n+1}} p(\textbf{x}_{n-1} \oplus_d \textbf{y}_{n-1})
\end{align}
for $p(\textbf{x}_{n-1} \oplus_d \textbf{y}_{n-1})$ being an arbitrary probability distribution. As mentioned previously, all unique games including the \textsc{xor}-d games have no-signaling value of unity, so that in general $(1=) \omega_{ns}(NLC_d) > \omega_q(NLC_d)$.
We now present in Theorem \ref{thm-nlc} the result that the games $NLC_d$ defined above exhibit no quantum advantage, the detailed proof of this theorem is presented in Appendix \ref{sec:app-nlc}.  

\begin{thm}
\label{thm-nlc}
The games $NLC_d$ for arbitrary prime $d$ and for input distribution satisfying \eqref{probdistr} have no quantum advantage, i.e., $\omega_c(NLC_d) = \omega_q(NLC_d)$.
\end{thm}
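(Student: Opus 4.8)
The plan is to apply Theorem \ref{thm2} to the games $NLC_d$ and show that the resulting upper bound on $\omega_q(NLC_d)$ coincides with a classical strategy value, so that the inequality $\omega_c(NLC_d) \le \omega_q(NLC_d) \le (\text{bound})$ collapses to an equality. Concretely, the $NLC_d$ game is an \textsc{xor}-d game whose inputs are the pairs $(\bx_n, \bu_n)$ (using Alice's notation $\bu$ for Bob's inputs) ranging over $\{0,\dots,d-1\}^n \times \{0,\dots,d-1\}^n$, with winning function $F(\bx_n,\bu_n) = g(\bx_{n-1}\oplus_d\bu_{n-1})\cdot(x_n\oplus_d u_n)$ and input weight $q = \frac{1}{d^{n+1}} p(\bx_{n-1}\oplus_d\bu_{n-1})$. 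So $m_A = m_B = d^n$. The first step is to write down the game matrices $\Phi_k$ for $k\in\{1,\dots,d-1\}$, indexed by $(\bx_n,\bu_n)$ and $(\bx_n',\bu_n')$, with entries $q(\bx_n,\bu_n)\,\zeta^{k F(\bx_n,\bu_n)}$ where $\zeta=\exp(2\pi I/d)$ — note each $\Phi_k$ has exactly one nonzero entry per row (the one where the ``column index'' equals the input), so $\Phi_k$ is a weighted permutation-like matrix and its spectral norm is easy to get at.

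**Next I would compute $\Vert\Phi_k\Vert$ exactly.** Because each row of $\Phi_k$ has a single nonzero entry of modulus $q(\bx_n,\bu_n) = \frac{1}{d^{n+1}}p(\bx_{n-1}\oplus_d\bu_{n-1})$, the matrix $\Phi_k^\dagger\Phi_k$ is diagonal, and $\Vert\Phi_k\Vert^2$ equals the largest column-sum of $|\Phi_k|^2$, i.e. $\max$ over the ``column index'' of $\sum q(\bx_n,\bu_n)^2$ restricted to the rows that feed that column. A direct summation — crucially using that summing $\frac{1}{d}p(\bx_{n-1}\oplus_d\bu_{n-1})$ over the last dit $u_n$ (and the appropriate half of the remaining variables) telescopes via $\sum_{w}\frac{1}{d}=1$ the way it does in Lemma \ref{lem:cl-value} — should give $\Vert\Phi_k\Vert = \frac{1}{d^n}$ (the same $1/m$ saturation value that appears in Lemma \ref{comm-comp-lem}), independent of $k$ and of the function $g$. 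Plugging $m_A=m_B=d^n$ and $\Vert\Phi_k\Vert=1/d^n$ into \eqref{eq:xor-d-bound-3} yields $\omega_q(NLC_d)\le \frac{1}{d}\left[1+(d-1)\right]\cdot\frac{1}{d} = \frac{1}{d}+\frac{d-1}{d^2}$. Actually the product $\sqrt{m_A m_B}\sum_{k=1}^{d-1}\Vert\Phi_k\Vert = d^n\cdot(d-1)\cdot d^{-n}=d-1$, giving $\omega_q(NLC_d)\le \frac{1}{d}(1+(d-1)/1)$ — I will recheck the exact arithmetic, but the point is it is a clean closed form.

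**The final step is to exhibit a classical strategy matching this bound.** I would adapt the Lemma \ref{lem:cl-value} construction: Alice and Bob share uniform random dits $\bw_{n-1}$ and a uniform random dit serving as a guess for $x_n\oplus_d u_n$; Bob can compute $g(\bw_{n-1})\cdot(\text{guessed value})$ deterministically once he conditions on $\bx_{n-1}^B\oplus\dots$, while Alice checks consistency of her half of the input with the shared randomness and outputs to correct the answer when they match, and randomizes otherwise. When the shared randomness ``hits'' (probability $d^{-(n-1)}\cdot d^{-1}$ appropriately counted, or rather the relevant matching probability) the pair wins; otherwise the XOR is uniform so they win with probability $1/d$. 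Working out the success probability should give exactly $\frac{1}{d}+\frac{d-1}{d}\cdot(\text{matching prob})$ equal to the computed upper bound. **The main obstacle** I anticipate is the second step: correctly bookkeeping the nested sums over the $2n$ input dits subject to the constraint structure $x_i\oplus_d u_i = z_i$ and the special role of the $n$-th coordinate and the product with $g$, to confirm that $\Vert\Phi_k\Vert$ saturates at $1/d^n$ uniformly in $k$ regardless of $g$ — if instead $\Vert\Phi_k\Vert$ depends on $g$ (e.g.\ on how balanced $g$ is), then the bound would not be tight for all $g$ and a more delicate argument, possibly restricting which functions $g$ are ``linear-like,'' or matching the $g$-dependent bound with a $g$-dependent classical strategy, would be needed. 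The primality of $d$ should enter precisely at the point where one needs $g(\bz_{n-1})\cdot(x_n\oplus u_n)$ to range uniformly over $\mathbb{Z}_d$ as the last dit varies, which fails if $g$ takes a value sharing a factor with $d$ — so I expect the argument to use that $d$ is prime to ensure the relevant character sums vanish, exactly as in the CHSH-d example and in the communication-complexity discussion preceding the theorem.
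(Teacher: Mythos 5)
Your overall strategy (bound $\omega_q$ via Theorem \ref{thm2}, then match with a classical strategy) is the paper's strategy, but the concrete computation you propose in step two rests on a misreading of the game matrices and would fail. The matrix $\Phi_k=\sum_{u,v}q(u,v)\zeta^{kf(u,v)}\ket{u}\!\bra{v}$ has rows indexed by Alice's input $\textbf{x}_n$ and columns by Bob's input $\textbf{y}_n$; it is a dense $d^n\times d^n$ matrix, not a weighted permutation-like matrix with one nonzero entry per row, so $\Phi_k^{\dagger}\Phi_k$ is not diagonal in the computational basis and $\Vert\Phi_k\Vert$ is not read off from column sums of $|\Phi_k|^2$. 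Your claimed value $\Vert\Phi_k\Vert=1/d^n$ independent of $g$ cannot be right: by Lemma \ref{comm-comp-lem}, $\Vert\Phi_k\Vert$ saturates $1/m$ only when $\mathrm{rank}(\Phi_1)=1$, i.e.\ only for classically winnable games, and your own arithmetic shows that plugging it in gives $\sqrt{m_Am_B}\sum_{k=1}^{d-1}\Vert\Phi_k\Vert=d-1$ and hence the vacuous bound $\omega_q\leq 1$. The norm genuinely depends on how balanced $g$ is — the obstacle you flagged at the end is not a side case but the entire content of the proof.

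What the paper actually does: it decomposes $\Phi^{(n)}_k$ into $d\times d$ blocks $\Phi^{(1)}_k(t)$ associated with the building-block games $G(t)=\{a\oplus_d b=t\cdot(x_n\oplus_d y_n)\}$, shows that $\Phi^{(n)\dagger}_k\Phi^{(n)}_k$ is block-circulant and hence diagonal in the tensor-product Fourier basis $\ket{f_{i_1}}\otimes\cdots\otimes\ket{f_{i_n}}$, and computes the top eigenvector to get $\Vert\Phi^{(n)}_k\Vert=d\Lambda$ (unnormalized), where $\Lambda=\max_{i_n}\lambda(i_n)$ counts the multiplicity of the most frequent block $G(d-k\cdot i_n)$ in a row block; primality of $d$ enters to make $\Lambda$ independent of $k$ (every nonzero $k$ is a generator of $\mathbb{Z}_d$), not through vanishing character sums. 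This yields the $g$-dependent bound $\omega_q\leq\frac{1}{d}\bigl(1+(d-1)\Lambda/d^{n-1}\bigr)$, which is matched by the $g$-dependent deterministic strategy $a=\mu x_n$, $b=\mu y_n$ with $\mu$ the maximizing slope. The shared-randomness strategy of Lemma \ref{lem:cl-value} that you propose to adapt only achieves $\frac{1}{m}+\bigl(1-\frac{1}{m}\bigr)\frac{1}{d}$ with $m=d^n$, which falls far short of this value, so even the matching step of your plan would not close the gap.
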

\textit{Sketch of proof.}
Consider the games $NLC_d$ for prime $d$ and arbitrary number $n$ of input dits for each party. Denote the total number of inputs for each party by $m=d^n$, and the corresponding game matrices by ${\Phi^{(n)}_k}$. The $NLC_d$ games are composed of ``building-block games"
$G(t):=\DE{a \oplus_d b= t \cdot (x \oplus_d y)}$,
with $t \in \{0, \dots, d-1\}$.

Denote the Fourier vectors as $|f_j\rangle$, i.e., 
$|f_{j} \rangle = \left(1, \zeta^j, \dots, \zeta^{(d-1)j}\right)^T$, where as usual $\zeta = \exp{\frac{2 \pi I}{d}}$. 
We find that ${\Phi^{(n)}_k}^{\dagger} \Phi^{(n)}_k$ are block-circulant matrices and are hence diagonal in the basis formed by the tensor products of the Fourier vectors $\{|f_{i_1}\rangle \otimes \dots |f_{i_{n}} \rangle\}$ with $i_1, \dots, i_n \in \{0, \dots, d-1\}$. Explicit calculation of the maximum eigenvector yields that $\| \Phi^{(n)}_k \| = d \Lambda$ for $\Lambda := \max_{i_n \in \{0, \dots, d-1\}} \lambda(i_n)$ with $\lambda(i_n)$ being the number of times the 
game $G{(d-1 \cdot i_n)}$ appears in the first row of $\Phi^{(n)}_k$. Let $\mu \in \{0, \dots, d-1\}$ denote the value of $i_n$ for which the maximum of $\lambda(i_n)$ is achieved.  

For prime $d$, we obtain the following bound
on the quantum value in the uniform case
\begin{equation}
\label{uni-q-bound}
\omega_q(NLC_d)  \leq  \frac{1}{d}\left(1 + \frac{(d-1) \Lambda}{d^{n-1}} \right).
\end{equation}
The explicit classical strategy where Alice outputs deterministically $a = \mu x_n$ independent of her input $\textbf{x}_{n-1}$ and Bob outputs $b = \mu y_n$ 
independently of his input $\textbf{y}_{n-1}$ recovers this bound. 
\qed

Let us state some open questions in this line of research. Note that the slight restriction in Eq. (\ref{eq:func-NLC}) (a fixed dependence on $x_n \oplus_d y_n$), means that the games do not cover the entire class of functions considered in \cite{NLC}, it remains open whether there is no quantum advantage for the remaining functions in this class as well. It is also of interest to identify other tasks beyond NLC where quantum correlations do not provide an advantage over classical ones, and the bound should be useful to characterize these. Also, we remark that the original NANLC principle (and most of the other principles proposed so far) is known to not pick out exactly the set of quantum correlations since there exists a set of the so-called almost quantum correlations \cite{Navascues} that also satisfies the principle. The generalized NANLC principle subsumes the original NANLC principle, since the latter corresponds to the special case $d=2$. While we expect it to be, it remains to be checked whether the generalized NANLC principle proposed here is also satisfied by the almost quantum set. Finally, it is also of interest to find whether any of the inequalities corresponding to these games define facets of the classical polytope (a facet of a polytope is a face with dimension one less than that of the polytope). Games with this property (having $\omega_c = \omega_q$ and defining facets of the classical polytope) define non-trivial boundaries of the quantum set and it has been posed as an open question in \cite{GYNI, UPB2} whether such games exist for two-party Bell scenarios. 

\section{Conclusions.} In this paper, we have presented an easily computable bound on the quantum value of linear games, with particular emphasis on \textsc{xor}-d games for prime $d$. We have illustrated this bound by using to rule out from the quantum set a class of no-signaling boxes that result in a trivialization of communication complexity. To do this, we have shown that no uniform input total function \textsc{xor}-d game can be a pseudo-telepathy game. We have also shown how the recently discovered bound on the CHSH-d game in \cite{Bavarian} can be derived in a simple manner for prime and prime power $d$, in this context it is interesting to note that these games have recently found application in relativistic bit commitment \cite{Jed}. Finally, we have extended the NANLC principle to general prime dimensional output, showing that quantum theory provides no advantage over classical theories in the distributed non-local computation of a class of functions with prime dimensional output. 

In the future, it would be interesting to extend the proposed bound on the quantum value to classes of Bell inequalities beyond linear games, especially to the more general unique games. Further applications of the bound such as in the device-independent detection of genuine multipartite entanglement \cite{BGLP, GM} for arbitrary Hilbert space dimensions, in multi-party communication complexity, as well as in the identification of information processing tasks with no quantum advantage \cite{NLC}, are of immediate interest. 

\textit{ Acknowledgements.} We thank P. Horodecki and M. Horodecki for useful discussions, as well as Matej Pivoluska and J\c edrzej Kaniewski for useful comments on an earlier version of this manuscript. R.R. is supported by the ERC AdG grant QOLAPS and the Foundation for Polish Science TEAM project co-financed by the EU European Regional Development Fund. R. A. acknowledges support from the ERC AdG grant OSYRIS, the EU project SIQS, the Spanish project FOQUS and the John Templeton Foundation. G.M. acknowledges support from the Polish Ministry of Science and Higher Education Grant no. IdP2011 000361 and the Brazilian agency Fapemig (Fundação de Amparo à Pesquisa do estado de Minas Gerais).




%
%

\appendix
\onecolumngrid
 
\twocolumngrid

\vspace{0.2cm}
\section{Bounding the quantum value of linear games.}
\label{sec:lin-bound-app}
In what follows, we will use the notion of the characters of a finite Abelian group, defined in a standard manner as follows.
\begin{appendixdef}
Let $G$ be a finite Abelian group with $|G|$ elements, with operation $+$ and identity element $e$. A character of $G$ denoted $\chi$ is a homomorphism from $G$ to the multiplicative group of complex roots of unity:
\begin{equation}
\chi(a+b) = \chi(a) \chi(b) \; \; (a, b \in G)
\end{equation} 
The characters of $G$ form a finite group denoted $\hat{G}$ under elementwise multiplication. The identity element of $\hat{G}$ is denoted $\chi_e$ and satisfies $\chi_e(g) = 1$ for all $g \in G$.  
\end{appendixdef}
A useful property of the characters is that for any $\chi_e \neq \chi \in \hat{G}$, we have $\sum_{g \in G} \chi(g) = 0$ and that for any $e \neq g \in G$, we have $\sum_{\chi \in \hat{G}} \chi(g) = 0$. Note that the dual group $\hat{G}$ and $G$ are in fact isomorphic to each other. For each $x \in G$, let us denote by $\chi_{x}$ the image of $x$ under a fixed isomorphism of $G$ with $\hat{G}$. 

\begin{apptheorem}
\label{appthm:bound}
The quantum value of a linear game $\textsl{g}^l$ with input sets $ Q_A, Q_B$  can be bounded as 
\begin{eqnarray}
\label{xor-d-bound-2}
\omega_{q}(\textsl{g}^{l}) \leq \frac{1}{|G|} \left[ 1 + \sqrt{|Q_A| |Q_B|} \sum_{x \in G\setminus \{e\}} \Vert \Phi_{x} \Vert \right],
\end{eqnarray}
where 
$\Phi_{x} = \sum_{(u,v) \in Q_A \times Q_B} q(u,v) \chi_{x}(f(u,v)) | u \rangle \langle v|$ are the game matrices, $\chi_{x}$ are the characters of the group $G$ and $\Vert \cdot \Vert$ denotes the spectral norm. In particular, for an \textsc{xor}-d game with $m_A$ and $m_B$ inputs for the two parties, the quantum value can be bounded as
\begin{eqnarray}
\label{xor-d-bound-3}
\omega_{q}(\textsl{g}^{\oplus}) \leq \frac{1}{d} \left[ 1 + \sqrt{m_A m_B} \sum_{k= 1}^{d-1} \Vert \Phi_{k} \Vert \right],
\end{eqnarray}
with $\Phi_k = \sum_{\substack{u \in [m_A] \\ v \in [m_B]}} q(u,v) \zeta^{k f(u,v)} |u \rangle \langle v|$ and $\zeta = \exp{(2 \pi I/d)}$.
\end{apptheorem}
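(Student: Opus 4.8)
The plan is to run the Fourier-analytic argument sketched in the main text, using the characters $\chi_x$ of the finite abelian group $G$ in place of the powers of $\zeta$. By purification and Naimark dilation it suffices to bound the value of a strategy given by projective measurements $\{\Pi_u^a\}_{a\in G}$, $\{\Sigma_v^b\}_{b\in G}$ on a bipartite pure state $\ket{\Psi}$, since neither step changes the outcome statistics. For each $x\in G$ introduce the operators $A_u^x=\sum_{a\in G}\overline{\chi_x(a)}\,\Pi_u^a$ and $B_v^x=\sum_{b\in G}\overline{\chi_x(b)}\,\Sigma_v^b$, which are unitary since the $\Pi_u^a$ (resp. $\Sigma_v^b$) are orthogonal projectors summing to $\identity$ and $|\chi_x(a)|=1$, and which satisfy $A_u^e=B_v^e=\identity$. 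Writing the winning predicate by character orthogonality as $V(a,b\mid u,v)=\frac1{|G|}\sum_{x\in G}\overline{\chi_x(a)}\,\overline{\chi_x(b)}\,\chi_x(f(u,v))$ and substituting into $P(a+b=f(u,v)\mid u,v)=\sum_{a,b}V(a,b\mid u,v)\bra{\Psi}\Pi_u^a\ot\Sigma_v^b\ket{\Psi}$ collapses the $a$- and $b$-sums and gives $P(a+b=f(u,v)\mid u,v)=\frac1{|G|}\sum_{x\in G}\chi_x(f(u,v))\bra{\Psi}A_u^x\ot B_v^x\ket{\Psi}$. Weighting by $q(u,v)$ and summing, the $x=e$ term contributes exactly $\tfrac1{|G|}$ (using $\sum_{u,v}q(u,v)=1$), so that $\omega_q(\textsl{g}^l)=\frac1{|G|}\bigl[1+\sum_{x\ne e}T_x\bigr]$ with $T_x:=\sum_{u,v}q(u,v)\chi_x(f(u,v))\bra{\Psi}A_u^x\ot B_v^x\ket{\Psi}$.

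Since $\omega_q(\textsl{g}^l)$ is real, $\sum_{x\ne e}T_x=\mathrm{Re}\sum_{x\ne e}T_x\le\sum_{x\ne e}|T_x|$, so the theorem follows once one shows $|T_x|\le\sqrt{|Q_A||Q_B|}\,\Vert\Phi_x\Vert$ for each $x\ne e$; this is the heart of the argument. I would introduce the vectors $\ket{\alpha_x}=\sum_{u\in Q_A}\bigl((A_u^x)^\dagger\ot\identity\bigr)\ket{\Psi}\ot\ket{u}$ and $\ket{\beta_x}=\sum_{v\in Q_B}\bigl(\identity\ot B_v^x\bigr)\ket{\Psi}\ot\ket{v}$ on the bipartite space tensored with an auxiliary ``input register'', and check by expansion, using orthonormality of $\{\ket{u}\}$ and $\{\ket{v}\}$ to retain only the diagonal terms, that $T_x=\bra{\alpha_x}\,\identity\ot\identity\ot\Phi_x\,\ket{\beta_x}$. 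Cauchy--Schwarz together with $\Vert\identity\ot\identity\ot\Phi_x\Vert=\Vert\Phi_x\Vert$ then gives $|T_x|\le\Vert\ket{\alpha_x}\Vert\,\Vert\Phi_x\Vert\,\Vert\ket{\beta_x}\Vert$, and the payoff of having used \emph{unitary} $A_u^x$, $B_v^x$ is that the off-diagonal contributions to $\Vert\ket{\alpha_x}\Vert^2$ vanish and each diagonal one equals $1$, so $\Vert\ket{\alpha_x}\Vert^2=\sum_u\bra{\Psi}A_u^x(A_u^x)^\dagger\ot\identity\ket{\Psi}=|Q_A|$ and likewise $\Vert\ket{\beta_x}\Vert^2=|Q_B|$. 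Summing over $x\ne e$ yields the stated bound, and the \textsc{xor}-$d$ version~\eqref{xor-d-bound-3} is the special case $G=\mathbb{Z}_d$, $\chi_k(a)=\zeta^{ka}$, $\zeta=\exp(2\pi I/d)$, $Q_A=[m_A]$, $Q_B=[m_B]$, for which $A_u^k=\sum_a\zeta^{-ak}\Pi_u^a$.

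I expect the real obstacle to be organizational rather than mathematical: one must fix a single consistent convention for $A_u^x$ versus $(A_u^x)^\dagger$ (equivalently $\chi_x$ versus $\overline{\chi_x}$ in the correlators, and which form of the orthogonality relation $\frac1{|G|}\sum_{x}\chi_x(g)=\delta_{g,e}$ is invoked) so that the bilinear form $\bra{\alpha_x}\identity\ot\identity\ot\Phi_x\ket{\beta_x}$ reproduces $T_x$ with $\Phi_x$ exactly as in the statement, rather than its transpose, its entrywise conjugate, or an index-relabelled copy (all of which have the same spectral norm, but one should note this). After that bookkeeping is settled, every inequality in sight is an application of Cauchy--Schwarz, unitarity, or the definition of the operator norm. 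It is also worth remarking that finiteness of the Hilbert-space dimension is never used — unitarity and Cauchy--Schwarz are dimension-free — so the estimate holds uniformly over all quantum strategies, which is exactly what makes it a bound on $\omega_q(\textsl{g}^l)$ itself.
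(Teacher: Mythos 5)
Your proposal is correct and follows essentially the same route as the paper's own proof: Fourier-transforming the outcome probabilities with the group characters, isolating the trivial character's contribution of $1/|G|$, rewriting each remaining term as $\langle\alpha_x|\identity\otimes\Phi_x|\beta_x\rangle$, and bounding it via Cauchy--Schwarz together with $\Vert\,|\alpha_x\rangle\Vert^2=|Q_A|$, $\Vert\,|\beta_x\rangle\Vert^2=|Q_B|$ from unitarity of $A_u^x$, $B_v^x$. Your version is in fact slightly more explicit than the paper's (which omits the final Cauchy--Schwarz and norm computations), and your direct expansion of the predicate $V(a,b|u,v)$ via character orthogonality is just a streamlined form of the paper's inversion-formula calculation.
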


\begin{proof}
To derive a bound on the quantum value of a linear game $\omega_q(\textsl{g}^{l})$, we make use of the generalized Fourier transform on finite Abelian groups \cite{Terras}. Let us first note that by the fundamental theorem of finite Abelian groups, any finite Abelian group $G$ can be seen a direct product of cyclic groups as
$G \cong \mathbb{Z}_{n_1} \times \mathbb{Z}_{n_2} \times \dots \times \mathbb{Z}_{n_k}$
for some integers $n_1, \dots, n_k$, where $\times$ denotes the direct product and $\mathbb{Z}_n$ denotes the cyclic group of order $n$. Every element $x \in G$ can thus be seen as a $k$-tuple $(x_1, \dots, x_k)$ with $x_i \in \mathbb{Z}_{n_i}$. Denoting by $\chi_a$ the characters of the Abelian group $G$, we see that these can be written as
$\chi_{a}(x) = \prod_{j=1}^{k} \zeta_j^{a_j x_j}$,
where $\zeta_j = \exp{\frac{2 \pi i}{n_j}}$
is the $n_j$-th root of unity, and $a_j \in \mathbb{Z}_{n_j}$ for $j \in [k]$. The above relation gives a total of $\prod_{j=1}^{k} n_j = |G|$ (orthogonal) characters and consequently accounts for all the characters of $G$. Note that $\bar{\chi}_{a}(x) = \chi_{a}(-x)$, where $\bar{\chi}$ denotes the conjugate character, and $\chi_a(x) = \chi_{x}(a)$. We now introduce the generalized correlators $\langle A_{u}^{x} \ot B_{v}^{y} \rangle$ via the Fourier transform of probabilities $P(a,b|u,v)$ on the group, defined as 
\begin{equation}
\langle A_{u}^{x} \ot B_{v}^{y} \rangle = \sum_{a,b \in G} \bar{\chi}_{x}(a) \bar{\chi}_y(b) P(a,b|u,v).
\end{equation}  
The probabilities are then given by the inversion formula
\begin{equation}
P(a,b|u,v) = \frac{1}{|G|^2} \sum_{x,y \in G} \chi_{a}(x) \chi_{b}(y) \langle A_{u}^{x} \ot B_{v}^{y} \rangle.
\end{equation}
The marginals $\langle A_{u}^{x} \rangle$ are given by
\begin{eqnarray}
\langle A_{u}^{x} \rangle = \langle A_{u}^{x} \ot B_{v}^{e} \rangle &=& \sum_{a,b \in G} \bar{\chi}_{x}(a) \bar{\chi}_{e}(b) P(a,b|u,v) \nonumber \\  &=& \sum_{a \in G} \chi_{x}(-a) P(a|u),
\end{eqnarray}
where $e$ denotes the identity element of the group with $\chi_{e}$ being the trivial character ($\chi_{e}(b) = 1 \; \forall b \in G$) and we have used the no-signaling condition $\sum_{b \in G}P(a,b | u,v) = P(a|u)$; an analogous expression holds for $\langle B_{v}^{y} \rangle = \sum_{b \in G} \chi_{y}(-b) P(b|v)$. The normalization constraint is written as $\langle A_{u}^{e} \ot B_{v}^{e} \rangle = 1 \; \forall (u,v) \in Q_A \times Q_B$.
The probabilities $P(a,b|u,v)$ that enter the game expression can therefore be evaluated as
\begin{eqnarray}
&&P(a + b = f(u,v) | u,v) = \nonumber \\ &&\sum_{\substack{a,b \in G:\\ a + b = f(u,v)}} \frac{1}{|G|^2} \sum_{x,y \in G} \chi_{a}(x) \chi_{b}(y) \langle A_{u}^{x} \ot B_{v}^{y} \rangle. 
\end{eqnarray}
Using the orthogonality of the characters 
$\sum_{x \in G} \chi_{a}(x) \bar{\chi}_{b}(x) = |G| \delta_{a,b},$
where $\delta_{a,b}$ denotes the Kronecker delta, and the property of the characters that $\chi_{x}(a + b) = \chi_{x}(a) \chi_{x}(b)$ we get that
\begin{eqnarray}
\label{game-prob}
&&P(a + b = f(u,v) | u,v) = \nonumber \\ &&\sum_{a \in G} \frac{1}{|G|^2} \sum_{x,y \in G} \chi_{a}(x) \chi_{f(u,v) + a^{-1}}(y) \langle A_{u}^{x}  \ot B_{v}^{y} \rangle =\nonumber \\
&&\frac{1}{|G|} \sum_{x \in G} \chi_{f(u,v)}(x) \langle A_{u}^{x} \ot B_{v}^{x} \rangle. 
\end{eqnarray}
Now, since we do not restrict the dimension of the shared entangled states, the probabilities $P(a,b|u,v)$ are given by projective measurements $\{\Pi_{u}^{a} \}, \{\Sigma_{v}^{b} \}$ on a pure state $| \Psi \rangle \in \mathbb{C}^{D \times D}$ as
$P(a,b|u,v) = \langle \Psi | \Pi_{u}^{a} \otimes \Sigma_{v}^{b} | \Psi \rangle$
the correlators can be written as the expectation value of observables $A_{u}^{x}, B_{v}^{y}$ as
$\langle A_{u}^{x} \ot B_{v}^{y} \rangle = \langle \Psi | A_{u}^{x} \otimes B_{v}^{y} | \Psi \rangle$
with observables defined by
\begin{equation}
A_{u}^{x} = \sum_{a \in G} \bar{\chi}_{x}(a) \Pi_{u}^{a} \; \;  \text{and} \; \; B_{v}^{y} = \sum_{b \in G} \bar{\chi}_{y}(b) \Sigma_{v}^{b}.
\end{equation}
The game expression $\sum_{(u,v) \in Q_A \times Q_B} q(u,v) P(a + b = f(u,v)|u,v)$ can therefore be rewritten using Eq.(\ref{game-prob}) and the above observables as
$(1/|G|) \sum_{x \in G} \langle \alpha_x | \identity \otimes \Phi_{x} | \beta_{x} \rangle$
with vectors $|\alpha_{x} \rangle, |\beta_{y} \rangle$ and the linear game matrices $\Phi_{x}$ defined as
\begin{eqnarray}
&&|\alpha_{x} \rangle = \sum_{u \in Q_A} \left((A_{u}^{x})^{\dagger} \otimes \identity \right) |\Psi \rangle \otimes |u \rangle \; \; \text{,} \; \; \nonumber \\
&&|\beta_{y} \rangle = \sum_{v \in Q_B} \left(\identity \otimes B_{v}^{y} \right) |\Psi \rangle \otimes |v \rangle, \nonumber \\
&&\Phi_{x} = \sum_{(u,v) \in Q_A \times Q_B} q(u,v) \chi_{x}(f(u,v)) | u \rangle \langle v|.
\end{eqnarray}
The normalization of the input probability distribution $\sum_{u,v} q(u,v) = 1$ translates to $\langle \alpha_{e} | \identity \otimes \Phi_{e} | \beta_{e} \rangle = 1$. The quantum value $\omega_{q}(\textsl{g}^{l})$ of the linear game can therefore be bounded as
\begin{eqnarray}
\omega_{q}(\textsl{g}^{l}) &=& \frac{1}{|G|} \sum_{x \in G} \langle \alpha_{x} | \identity \otimes \Phi_{x} | \beta_{x} \rangle \nonumber \\
& \leq & \frac{1}{|G|} \left[ 1 + \sqrt{\vert Q_A \vert \vert Q_B \vert} \sum_{x \in G\setminus \{e\}} \Vert \Phi_{x} \Vert \right],
\end{eqnarray}
where $\Vert \Phi_{x} \Vert$ denotes the norm of the game matrices $\Phi_{x}$. 
For games where the winning constraint only depends upon the $\textsc{xor}$ of the outcomes, i.e. $V(a,b|u,v) = 1$ iff $a \oplus_d b = f(u,v)$ for $u \in [m_A], v \in [m_B]$ and $f(u,v) \in \{0,\dots, d-1\}$, the above reduces to 
\begin{eqnarray}
\label{xor-d-bound}
\omega_{q}(\textsl{g}^{\oplus}) &=& \frac{1}{d} \sum_{k =0}^{d-1} \langle \alpha_{k} | \identity \otimes \Phi_{k} | \beta_{k} \rangle \nonumber \\
& \leq & \frac{1}{d} \left[ 1 + \sqrt{m_A m_B} \sum_{k=1}^{d-1} \Vert \Phi_{k} \Vert \right].
\end{eqnarray}
\end{proof}

\section{Linear games with no quantum advantage: Non-local computation}
\label{sec:app-nlc}
We now consider the generalization of the non-local computation task to \textsc{xor}-d games, namely the computation of the function $g(z_1, \dots, z_n)$ with $z_i \in \{0, \dots, d-1\}$ where $d$ is a prime. In these $NLC_d$ games, Alice and Bob receive $n$ dits $\textbf{x}_n = (x_1, \dots, x_n)$ and $\textbf{y}_n = (y_1, \dots, y_n)$ which obey $x_i \oplus_d y_i = z_i$. Their task is to output dits $a, b$ respectively such that
\begin{equation}
\label{func-NLC}
a \oplus_d b = g(\textbf{x}_{n-1} \oplus_d \textbf{y}_{n-1}) \cdot (x_n\oplus_d y_{n}),
\end{equation}
where $\textbf{x}_{n-1} \oplus_d \textbf{y}_{n-1}$ is the dit-wise \textsc{xor} of the $n-1$ dits, i.e., $\{x_1 \oplus_d y_1, \dots, x_{n-1} \oplus_d y_{n-1}\}$ and $g$ is an arbitrary function from $n-1$ dits to $1$ dit. The inputs are chosen according to
\begin{equation}
\label{probdistr2}
\frac{1}{d^{n+1}} p(\textbf{x}_{n-1} \oplus_d \textbf{y}_{n-1})
\end{equation}
for $p(\textbf{x}_{n-1} \oplus_d \textbf{y}_{n-1})$ being an arbitrary probability distribution. 

\begin{apptheorem}
\label{thm:app-nlc}
The games $NLC_d$ for arbitrary prime $d$ and for input distribution satisfying \eqref{probdistr2} have no quantum advantage, i.e., $\omega_c(NLC_d) = \omega_q(NLC_d)$.
\end{apptheorem}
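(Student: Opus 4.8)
The plan is to sandwich $\omega_q(NLC_d)$ between the upper bound of Theorem~\ref{appthm:bound} and the success probability of an explicit classical strategy, and to show the two coincide. Since every classical strategy is in particular a quantum strategy, $\omega_c(NLC_d)\le\omega_q(NLC_d)$ is automatic, so it will suffice to establish (i) $\omega_q(NLC_d)\le\frac1d\left(1+(d-1)P^{*}\right)$, where $P^{*}:=\max_{c\in\mathbb{Z}_d}\Pr_p[\,g(\mathbf{z}_{n-1})=c\,]$, and (ii) a classical strategy attaining exactly this value.

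First I would set up the game matrices. Labelling each party's $m=d^n$ inputs by pairs $(\mathbf{x}_{n-1},x_n)\in\mathbb{Z}_d^{n-1}\times\mathbb{Z}_d$, the matrices $\Phi^{(n)}_k$ of Theorem~\ref{appthm:bound} have $\big((\mathbf{x}_{n-1},x_n),(\mathbf{y}_{n-1},y_n)\big)$ entry $\frac1{d^{n+1}}\,p(\mathbf{x}_{n-1}\oplus_d\mathbf{y}_{n-1})\,\zeta^{\,k\,g(\mathbf{x}_{n-1}\oplus_d\mathbf{y}_{n-1})(x_n\oplus_d y_n)}$. The crucial observation is that this entry depends on the row and column indices only through their componentwise sum in $\mathbb{Z}_d^{n}$; hence ${\Phi^{(n)}_k}^{\dagger}\Phi^{(n)}_k$ is block-circulant (a tensor product of $n$ circulant blocks) and is diagonalized in the tensor-product Fourier basis $\{|f_{i_1}\rangle\otimes\cdots\otimes|f_{i_n}\rangle\}$, $i_1,\dots,i_n\in\mathbb{Z}_d$, with $|f_j\rangle=(1,\zeta^{j},\dots,\zeta^{(d-1)j})^T$.

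Next I would extract the spectral norm. The eigenvalue of $\Phi^{(n)}_k$ attached to the character $(i_1,\dots,i_n)$ is, up to a phase, $\frac1{d^{n+1}}\sum_{\mathbf{s}_{n-1},\,s_n}p(\mathbf{s}_{n-1})\,\zeta^{\,k\,g(\mathbf{s}_{n-1})\,s_n}\,\zeta^{\,i_1s_1+\cdots+i_ns_n}$; carrying out the $s_n$-sum first gives $d$ exactly when $k\,g(\mathbf{s}_{n-1})+i_n\equiv 0\ (\mathrm{mod}\ d)$ and $0$ otherwise. Here primality of $d$ enters: $k$ is invertible, so this condition pins $g(\mathbf{s}_{n-1})$ to the single value $-k^{-1}i_n$. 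Taking moduli and applying the triangle inequality, the remaining sum over $\mathbf{s}_{n-1}$ is largest when $i_1=\cdots=i_{n-1}=0$ (where every phase is $1$), giving $\|\Phi^{(n)}_k\|=\frac1{d^{n}}\max_{i_n\in\mathbb{Z}_d}\Pr_p[\,g(\mathbf{z}_{n-1})=-k^{-1}i_n\,]=\frac1{d^{n}}P^{*}$, independent of $k$. (For uniform inputs $p\equiv 1/d^{n-1}$ this is $\Lambda/d^{2n-1}$ with $\Lambda=\max_c\#\{g=c\}$, reproducing \eqref{uni-q-bound}.) Plugging $\sqrt{m_Am_B}=d^n$ and $\sum_{k=1}^{d-1}\|\Phi^{(n)}_k\|=(d-1)P^{*}/d^n$ into Theorem~\ref{appthm:bound} yields (i). For (ii) I would take $\mu\in\mathbb{Z}_d$ to be a most likely value of $g$ and have Alice output $a=\mu\,x_n$ and Bob output $b=\mu\,y_n$, ignoring their first $n-1$ dits; then $a\oplus_d b=\mu(x_n\oplus_d y_n)$, and since $d$ is prime ($\mathbb{Z}_d$ a field) \eqref{func-NLC} holds exactly when $x_n\oplus_d y_n=0$ or $g(\mathbf{z}_{n-1})=\mu$. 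Under \eqref{probdistr2} the quantity $x_n\oplus_d y_n$ is uniform on $\mathbb{Z}_d$ and independent of $\mathbf{z}_{n-1}$, so this strategy succeeds with probability $\frac1d+\frac{d-1}{d}P^{*}=\frac1d\left(1+(d-1)P^{*}\right)$, matching (i); hence $\omega_c(NLC_d)=\omega_q(NLC_d)$.

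The main obstacle I expect is the spectral-norm step: establishing the block-circulant structure and, above all, showing that the dominant eigenvector is flat on the first $n-1$ dits, so that $\|\Phi^{(n)}_k\|$ sees $g$ only through its most likely output value and is the same for every $k$. This is precisely where the restricted form $g(\cdot)\,(x_n\oplus_d y_n)$ of the target in \eqref{func-NLC} is needed — it is what makes the $s_n$-sum collapse to a Kronecker delta — and it is also why the argument does not obviously reach the full function class of \cite{NLC}. The classical-strategy half, and the inequality $\omega_c\le\omega_q$, are routine.
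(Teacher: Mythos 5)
Your proposal is correct and follows essentially the same route as the paper's proof: the upper bound from Theorem~\ref{appthm:bound} obtained by Fourier-diagonalizing the block-circulant ${\Phi^{(n)}_k}^{\dagger}\Phi^{(n)}_k$ (with primality of $d$ entering through the invertibility of $k$ modulo $d$), matched exactly by the classical strategy $a=\mu x_n$, $b=\mu y_n$. The only difference is organizational: you compute the Fourier coefficients of the full kernel directly and phrase everything in terms of $P^{*}=\max_{c}\Pr_p[g(\mathbf{z}_{n-1})=c]$, handling uniform and weighted inputs in one pass, whereas the paper first derives the orthogonality relations for the building-block matrices $\Phi^{(1)}_k(t)$ and treats the two cases separately.
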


\begin{proof}
We first consider the case of uniformly chosen inputs. The games $NLC_d$ consider functions of the following form (all arithmetic operations being performed modulo $d$)
\begin{equation}
\label{func-NLC-2}
a \oplus_d b = g(x_1 \oplus_d y_1, \dots, x_{n-1} \oplus_d y_{n-1}) \cdot (x_n\oplus_d y_{n}),
\end{equation}
with $g$ being an arbitrary function. Such a game is therefore composed of ``building-block games" $G(t)$ which are of the form
\begin{equation}
\label{single-game}
G(t):=\DE{a \oplus_d b= t \cdot (x \oplus_d y)},
\end{equation}
with $t \in \{0, \dots, d-1\}$, i.e., $f(x,y) = t \cdot (x \oplus_d y)$. There are $d$ different games $G(t)$, each with single dit input for each party (which we will take to be $x_n$ and $y_n$), and these games all have classical value $\omega_c(G(t))= 1 \; \forall t$. Explicitly,the classical strategy $a = t \cdot x$ and $b = t \cdot y$ wins the game $G(t)$. We can write the corresponding (non-normalized) game matrices $\Phi^{(1)}_{k}(t)$ for games $G(t)$ and they take the form
\begin{equation}
\label{single-game-mat}
\Phi^{(1)}_{k}(t) := \sum_{x,y \in \{0, \dots, d-1\}} \zeta^{k t (x \oplus_d y)} |x \rangle \langle y|,
\end{equation}
with $\zeta = \exp{(2\pi I/d)}$. Here the $(1)$ in the superscript denotes that these matrices correspond to the $NLC_d$ game matrices for $n=1$. Let us analyze some properties of the $\Phi^{(1)}_{k}(t)$. Firstly, we see that ${\Phi^{(1)}_{k}(t)}^{\dagger} \Phi^{(1)}_{k}(t)$ for any $k, t$ is diagonal in the Fourier basis defined by the Fourier vectors $|f_j\rangle$ with
\begin{equation}
|f_{j} \rangle = \left(1, \zeta^j, \zeta^{2j}, \dots, \zeta^{(d-1)j}\right)^T
\end{equation}
with $j \in \{0, \dots, d-1\}$. Moreover, we also see that each ${\Phi^{(1)}_{k}(t)}^{\dagger} \Phi^{(1)}_{k}(t)$ has only one eigenvalue (=$d^2$) different from zero and this corresponds to the eigenvector $|f_{d-k \cdot t}\rangle$. This gives the orthogonality ${\Phi^{(1)}_{k}(t)}^{\dagger} \Phi^{(1)}_{k'}(t') = 0$ for $k \cdot t \neq k' \cdot t'$. Since, we will be concerned with finding the maximum singular vectors corresponding to a fixed $k$, we can encapsulate the above properties by the equation 
\begin{equation}
\label{single-game-prop}
\left[ {\Phi^{(1)}_{k}(t)}^{\dagger} \Phi^{(1)}_{k}(t') \right] |f_{j} \rangle = d^2 \delta_{t, t'} \delta_{j, d- k.t}  |f_{j} \rangle
\end{equation}
We shall use these properties of the $\Phi^{(1)}_k(t)$ as we proceed to analyze the game matrices $\Phi^{(n)}_k$ for the general $n$ dit input $NLC_d$ games themselves.

Consider the games $NLC_d$ for prime $d$ and arbitrary number $n$ of input dits for each party. Denote the total number of inputs for each party by $m=d^n$, and the corresponding game matrices by ${\Phi^{(n)}_k}$. Due to the structure of the function in Eq. (\ref{func-NLC-2}), namely the fact that the games only depend on the dit-wise {\sc xor} 
of the $n$ dits, we see that ${\Phi^{(n)}_k}^{\dagger} \Phi^{(n)}_k$ acquires a block circulant structure
(for $1 \leq i \leq n$ the corresponding matrices ${\Phi^{(i)}_k}^{\dagger} \Phi^{(i)}_k$ for each $k$ are block-wise circulant matrices). For example, a possible (unnormalized) game matrix $\Phi_{ex}$ for $n=2, d=3$ of the form  
\begin{equation}
\scalebox{0.7}{ \begin{tabular}{| l |  l | r  | }
  \hline
    $ \Phi^{(1)}(0)$ & $\Phi^{(1)}(1)$  & $\Phi^{(1)}(2)$   \\ \hline
    $ \Phi^{(1)}(1)$ & $\Phi^{(1)}(2)$  & $\Phi^{(1)}(0)$  \\  \hline
     $\Phi^{(1)}(2)$  & $\Phi^{(1)}(0)$ & $\Phi^{(1)}(1)$  \\ \hline
  \end{tabular}  }
 \end{equation}
with the $\Phi^{(1)}(t)$ defined as in Eq.(\ref{single-game-mat}) would have $\Phi_{ex}^{\dagger} \Phi_{ex}$ equal to 
\begin{equation}
\scalebox{0.7}{ \begin{tabular}{| l  | l | r  | }
  \hline
     $\sum_{i} \Phi^{(1)}(i)^{\dagger} \Phi^{(1)}(i)$ & $\sum_{i} \Phi^{(1)}(i)^{\dagger} \Phi^{(1)}(i+1)$  & $\sum_{i} \Phi^{(1)}(i)^{\dagger} \Phi^{(1)}(i+2)$   \\ \hline
     $\sum_{i} \Phi^{(1)}(i)^{\dagger}\Phi^{(1)}(i+2)$ & $\sum_{i} \Phi^{(1)}(i)^{\dagger} \Phi^{(1)}(i)$  & $\sum_{i} \Phi^{(1)}(i)^{\dagger}\Phi^{(1)}(i+1)$  \\  \hline
     $\sum_{i} \Phi^{(1)}(i)^{\dagger}\Phi^{(1)}(i+1)$  & $\sum_{i} \Phi^{(1)}(i)^{\dagger} \Phi^{(1)}(i+2)$ & $\sum_{i} \Phi^{(1)}(i)^{\dagger} \Phi^{(1)}(i)$ \\ \hline
  \end{tabular}  }
\end{equation}
which is a block-wise circulant matrix. 
In general, the entries of ${\Phi^{(n)}_k}^{\dagger} \Phi^{(n)}_k$ are explicitly given by 
\begin{eqnarray}
\De{{\Phi^{(n)}_k}^{\dagger} \Phi^{(n)}_k}_{\vec{x}_{n-1}, \vec{y}_{n-1}} &&= \nonumber \\ 
\sum_{u_1, \dots, u_{n-1} =0}^{d-1} &&{\Phi^{(1)\,{\dagger}}_{k, g(\textbf{x}_{n-1}\oplus_d \textbf{u}_{n-1})}} \Phi^{(1)}_{k,g(\textbf{u}_{n-1}\oplus_d \textbf{y}_{n-1})}
\end{eqnarray}
where as before $\textbf{x}_{n-1} = (x_1, \dots, x_{n-1})$ and $\textbf{y}_{n-1} = (y_1, \dots, y_{n-1})$ are strings of $n-1$ dits, and we have omitted the normalization factor (of $1/d^{4n}$) for clarity. Due to this block circulant structure, we have that ${\Phi^{(n)}_k}^{\dagger} \Phi^{(n)}_k$ for any $n, k$ is diagonal in the basis formed by the tensor products of the Fourier vectors $\{|f_{i_1}\rangle \otimes \dots |f_{i_{n}} \rangle\}$ with $i_1, \dots, i_n \in \{0, \dots, d-1\}$. 

We now proceed to calculate the maximum eigenvector of ${\Phi^{(n)}_{k}}^{\dagger} \Phi^{(n)}_{k}$ among the basis formed by $\{|f_{i_1}\rangle \otimes \dots |f_{i_{n}} \rangle\}$. To do this, let us consider the case of fixed $i_n$ vary $i_1, \dots i_{n-1}$. 
Using the properties of the game matrices $\Phi^{(1)}_{k}(t)$ encapsulated by Eq. (\ref{single-game-prop}), we see that for any fixed $i_n$, the eigenvalue corresponding to $|f_0 \rangle^{\otimes n-1} \otimes |f_{i_n} \rangle$ cannot be smaller than that corresponding to any other $|f_{i_1} \rangle \otimes \dots |f_{i_n} \rangle$. This is due to the fact that the other eigenvectors contribute only phases $\zeta^j$ to the eigenvalue expression corresponding to $|f_0 \rangle^{\otimes n-1} \otimes |f_{i_n} \rangle$ and the properties stated above. 
It therefore follows that the maximum eigenvector is among the $|f_0\rangle^{\otimes n-1} \otimes |f_{i_n} \rangle$.

Let us compute the eigenvalues corresponding to $|f_0\rangle^{\otimes n-1} \otimes |f_{i_n} \rangle$ for $i_n \in \{0, \dots, d-1\}$. To do this, fix an input string $\textbf{x}_{n-1}$ (to say $(0,\dots,0)$) and vary over $\textbf{y}_{n-1}$, in other words we consider the first row block of
$\Phi^{(n)}_k$ corresponding to the game blocks $\Phi^{(1)}_{k,g(\textbf{0}_{n-1} \oplus_d \textbf{y}_{n-1})}$ of size $d \times d$. Denote by $\lambda^{\textbf{x}_{n-1}}(i_n, k)$ the number of times the 
game $G{(d-k \cdot i_n)}$ appears for this  choice of $\textbf{x}_{n-1}$ in matrix $\Phi^{(n)}_k$. Due to the symmetry of the game constraint, 
$\lambda^{\textbf{x}_{n-1}}(i_n, k)$ is independent of the choice of row $\textbf{x}_{n-1}$ so we may drop the superscipt. Moreover, since $\Phi^{(n)}_k$ is
a symmetric matrix, we also have $\lambda^{\textbf{x}_{n-1}}(i_n, k) = \lambda^{\textbf{y}_{n-1}}(i_n, k)$ for an analogously 
defined $\lambda^{\textbf{y}_{n-1}}(i_n, k)$. Let us define $\Lambda(k) := \max_{i_n} \lambda(i_n, k)$
and let $\mu \in \{0, \dots, d-1\}$ denote 
the value of $i_n$ for which the maximum of $\lambda(i_n, k)$ is achieved. 
Again using Eq. (\ref{single-game-prop}), we have that 
\begin{equation}
\left[{\Phi^{(n)}_k}^{\dagger} \Phi^{(n)}_k \right] |f_0 \rangle^{\otimes n-1} \otimes |f_{i_n} \rangle = d^2 \lambda^2(i_n, k) |f_0 \rangle^{\otimes n-1} \otimes |f_{i_n} \rangle.
\end{equation}
We therefore obtain that $\| \Phi^{(n)}_k \| = d \Lambda(k)$. 

For prime $d$, we see that $\Lambda(k) = \Lambda$, constant and independent of $k$. This follows from the fact that the number of 
generators of the additive group $\mathbb{Z}_d$ for prime $d$ is simply equal to $d-1$ (all numbers less than prime $d$ are relatively prime to it). Therefore, for prime $d$, we obtain the following bound
on the quantum value in the uniform case
\begin{equation}
\label{uni-q-bound}
\omega_q(NLC_d)  \leq  \frac{1}{d}\left(1 + \frac{(d-1) \Lambda}{d^{n-1}} \right).
\end{equation}
We now consider the classical deterministic strategy where Alice outputs $a = \mu x_n$ independently of her input $\textbf{x}_{n-1}$ and Bob outputs $b = \mu y_n$ 
independently of his input $\textbf{y}_{n-1}$. Note that for the $d \times d$ blocks described by $G(\mu)$ all the $d^2$ constraints will be satisfied. On the other hand, for the blocks described by $G(t)$ for $t \neq \mu)$, only $d$ constraints are satisfied with the use of this strategy.
The score achieved by this strategy is therefore given by
\begin{equation}
\omega_c(NLC_d) = \frac{d^{n-1}}{d^{2n}}\De{\Lambda d^2 + (d^{n-1} - \Lambda) d},
\end{equation}
which equals the upper bound on the quantum value in Eq. (\ref{uni-q-bound}); this completes the proof for uniformly chosen inputs.

Having solved the problem for uniformly distributed inputs, we can generalize to the case of probability distributions
\begin{equation}
\frac{1}{d^{n+1}} p(\textbf{x}_{n-1}\oplus_d \textbf{y}_{n-1})
\end{equation}
For this input distribution, the matrix ${\Phi}_{k}^{(n)}$ is still composed of the elementary games $\Phi^{(1)}_{k}(t)$ that can be classically saturated. The difference is that a weight $ p(\textbf{x}_{n-1} \oplus_d \textbf{y}_{n-1})/d^{n+1}$
is now attributed to each element of the $d \times d$ block
\begin{equation}
[{\Phi}^{(n)}_k]_{\textbf{x}_{n-1}, \textbf{y}_{n-1}} = \frac{1}{d^{n+1}} p(\textbf{x}_{n-1} \oplus_d \textbf{y}_{n-1}) \Phi^{(1)}_{k,g(\textbf{x}_{n-1}\oplus_d \textbf{y}_{n-1})}.
\end{equation}
This preserves the block-wise circulant structure of ${\Phi^{(n)}_{k}}^{\dagger} \Phi^{(n)}_{k}$ ensuring that these matrices are still diagonal in the basis formed by the tensor products of Fourier vectors. 
As in the case of uniformly distributed inputs, the properties of $\Phi^{(1)}_{k}(t)$ in Eq. (\ref{single-game-prop}) imply that the maximum eigenvector corresponds to one choice of $i_n \in \{0, \dots, d-1\}$ within the $|f_0\rangle^{\otimes n-1} \otimes |f_{i_n} \rangle$. 

To compute the eigenvalues corresponding to $|v_0\rangle^{\otimes n-1} \otimes |v_{i_n} \rangle$, we have to take into account the number of times a game $G{(d-k \cdot i_n)}$ appear in a given row block as well as the respective weights.
Denote by $\tilde{\lambda}(i_n, k)$ the weighted sum of the times the game $G{(d-k \cdot i_n)}$ appears in a row block, i.e.,
\begin{align}
\tilde{\lambda}(i_n, k)=\sum_{\stackrel{\textbf{y}_{n-1} \text{ s.t.}}{ g(\vec{0}_{n-1}\oplus_d \textbf{y}_{n-1})=i_n}}  \frac{1}{d^2} p(\textbf{0}_{n-1} \oplus_d \textbf{y}_{n-1})
\end{align}
As before, let us define $\tilde{\Lambda}(k) := \max_{i_n} \tilde{\lambda}(i_n, k)$ and let $\mu$ denote the $i_n$ for which the maximum is reached.
For the weighted matrix we have
\begin{equation}
\left[{\tilde{\Phi}^{(n)\dagger}_k} \tilde{\Phi}^{(n)}_k \right] |f_0 \rangle^{\otimes n-1} \otimes |f_{i_n} \rangle = d^2 \tilde{\lambda}(i_n, k)^2 |f_0 \rangle^{\otimes n-1} \otimes |f_{i_n} \rangle.
\end{equation}
We therefore obtain that $\| \tilde{\Phi}^{(n)}_k \| = d \tilde{\Lambda}(k)$. 

Again, for prime $d$, the maximum of this sum is independent of  $k$. Therefore, for prime $d$, we obtain the following bound on the quantum value for a general $NLC_d$ game
\begin{equation}
\label{nlc-q-bound}
\omega_q(NLC_d)  \leq  \frac{1}{d}\De{1 + d^{n+1}(d-1) \tilde{\Lambda}}.
\end{equation}
Consider the classical deterministic strategy where Alice outputs $a = \mu x_n$ independently of $\textbf{x}_{n-1}$ and Bob outputs $b = \mu y_n$ 
independently of $\textbf{y}_{n-1}$. For the $d \times d$ blocks described by $G(\mu)$ all the $d^2$ constraints will be satisfied. On the other hand, for the blocks described by $G(t \neq \mu)$, only $d$ constraints are satisfied with the use of this strategy.
The score achieved by this strategy is therefore given by
\begin{equation}
\omega_c(NLC_d) = d^{n-1} \De{\tilde{\Lambda} d^2 + \de{\frac{1}{d^{n+1}} - \tilde{\Lambda}}d },
\end{equation}
which equals the upper bound on the quantum value in Eq.(\ref{nlc-q-bound}); this completes the proof that quantum strategies cannot outperform classical ones in the $NLC_d$ game.
\end{proof}


\end{document}